\newtheorem{theorem}{Theorem}[section]
\newtheorem{corollary}{Corollary}
\theoremstyle{definition}
\newtheorem{definition}[theorem]{Definition}
\newtheorem{remark}{Remark}
\title[ Predator interference]
      {Predator interference effects on biological control: The ``paradox" of the generalist predator revisited}
\author[  Bhowmick, Quansah, Basheer, Parshad, Upadhyay. ]{}
\subjclass{Primary: 35B44; 35B36; Secondary: 92D25. }
 \keywords{ predator interference, Beddington-DeAngelis functional response, Turing instability, finite time blow-up, biological control}
 \email{rparshad@clarkson.edu}
\begin{document}
\maketitle

\centerline{\scshape Suman Bhowmick, Emmanuel Quansah, Aladeen Basheer,
 and Rana D. Parshad}
{
\footnotesize
 \centerline{Department of Mathematics,}
 \centerline{Clarkson University,}
   \centerline{ Potsdam, New York 13699, USA.}
 }

   \medskip
\centerline{\scshape Ranjit Kumar Upadhyay }
{\footnotesize
% please put the address of the first author
 \centerline{Department of Applied Mathematics,}
 \centerline{Indian School of Mines,}
   \centerline{ Dhanbad 826004, Jharkhand, India.}
    %\centerline{ranjit.upadhyay@yahoo.com}

\medskip

}

%The abstract of your paper
\begin{abstract}
An interesting conundrum in biological control questions the efficiency of generalist predators as biological control agents.
Theory suggests, generalist predators are poor agents for biological control, primarily due to mutual interference. However
field evidence shows they are actually quite effective in regulating pest densities. 
In this work we provide a plausible answer to this paradox.
We analyze a three species model, where a generalist top predator is introduced into an ecosystem as a biological control, to check the population of a middle predator, that in turn is depredating on a prey species. We show that the inclusion of predator interference alone, can cause the solution of the top predator equation to blow-up in finite time, while there is global existence in the no interference case.
This result shows that interference could actually cause a population explosion of the top predator, enabling it to control the target species,  thus corroborating recent field evidence.
Our results might also partially explain the population explosion of certain species, introduced originally for biological control purposes, such as the cane toad (\emph{Bufo marinus}) in Australia, which now functions as a generalist top predator. 
We also show both Turing instability and spatio-temporal chaos in the model. Lastly we investigate time delay effects.
 \end{abstract}

 \section{Introduction}\label{1}
 \subsection{Background}
Exotic species, commonly referred to as ``invasive" species, are defined as any species, capable of propagating into a nonnative environment. If established,
they can be extremely difficult to eradicate, or even manage \cite{b19, M00}. There are numerous cases of environmental, economic
and ecological losses attributed to invasive species \cite{Pimentel05}. Some well-known examples of these species include the Burmese python in the United States and the cane toad in Australia, both of which have been wreaking havoc on indigenous ecosystems \cite{dorcas12, p18, Rodda05, letnic08}. Despite the magnitude of threats posed by invasive species, there have been very few conclusive results to eradicate or contain these species in real scenarios, in the wild. 

Biological control is an adopted strategy to limit invasive populations \cite{D96,V96}.
It works on the so called ``enemy release hypothesis", in which natural enemies of the target species/pest, are released against it in a controlled fashion. These can be in the form of predators, parasitoids, pathogens or combinations thereof \cite{V96}. It is a controversial yet fascinating area, with much debate and regulation. An interesting problem in the field, is the ``biological control paradox" \cite{A91}. This has attracted much research attention, see \cite{HK03} and the references within. The essential paradox here is that if one models a predator-pest system, via the Holling type II functional response, we cannot obtain a stable coexistence equilibrium, where the pest density is low. However, in reality many predators introduced for biological control purposes, are able to keep pest densities down to low levels.
Yet another interesting conundrum, also of a paradoxical nature, pertains to the effectiveness of generalist predators as biological controls.
In theory, generalist predators, are considered poor agents for biological control \cite{L97}. This is due to many factors, such as lack of specific prey targets, highly frequent interference amongst themselves, and their interference in the search of other specialist predators \cite{L97, P92}.
However, there is a large body of growing field evidence, showing that generalist predators, are actually quite effective in regulating pest densities \cite{R95, S99}. Thus there is an apparent discrepancy between what theory predicts, and what is actually seen in empirical observations. The primary objective of the current manuscript, is to corroborate these empirical observations, by proposing an alternate theory to understand the effect of generalist predator interferences, as it effects their efficiency as biological controls. 

Biological control is ``risky business". For example, the introduced species might attack a variety of species, other than those it was released to control. This phenomena is referred to as a \emph{non-target effect} \cite{F00}, and is common in natural enemies that are generalist predators. Some well known example of this are the cane toad (\emph{Bufo marinus}) and the Nile Perch (\emph{Lates niloticus}) \cite{CaneToad, NP05}. 
The secondary objective of the current manuscript, is to use our proposed theory to explain why certain species such as the cane toad, that were originally introduced as a biological control, have had an explosive increase in population. Before we delve further into these aspects we briefly survey some of the relevant literature on mutual interference.

Mutual interference \cite{1, 2a} is defined as the behavioral interactions among feeding organisms, that reduce the time that each individual spends obtaining food, or the amount of food each individual consumes. It occurs most commonly where the amount of food is scarce, or when the population of feeding organisms is large \cite{3a}.
Food chain models incorporating mutual interference have a long history, and were first proposed by Hassell \cite{1}, and Roger and Hassell \cite{2}, to model insect parasites and predator searching behaviour. Three species food chain models with mutual interference, and time delays, were proposed by Freedman and his group and they studied the trade-off between mutual interference among predators, and time delays due to gestation \cite{3, 4}. They observed that mutual interference is acting as a stabilizing factor and time delay does not necessarily destabilize the system, but increasing delay may cause a bifurcation into periodic solutions. For a delayed predator-prey model with mutual interference parameter Wang and Zu \cite{5, 6} have obtained some sufficient conditions for the permanence and global attractively. Comparing this with  empirical/statistical evidences from 19 predator-prey systems, Skalski and Gilliam \cite{7} pointed out that the predator dependent functional responses (Hassell-Varley type \cite{8}, Beddington-DeAngelis  type  \cite{9,10,11,12,13,14} and Crowley-Martin type \cite{15}) could provide a better description of predator feeding, over a range of predator-prey abundances, and in some cases the Beddington-DeAngelis (BD) type functional response performed even better \cite{16}. Upadhyay and Iyengar \cite{17} have pointed out that if predators do not waste time, interacting with one another, or if the attacks are always successful and instantaneous, then the response changes into Holling type II functional response, and the predators benefit from co-feeding. 

In light of our objectives we ask the following questions:

 \begin{itemize}
\item How does mutual interference amongst generalist predators effect their efficiency as biological controls?
\item Specifically, can the interference delay or exacerbate controlling the target species?
\item Can interference cause a population explosion in certain generalist predator biological agent populations?
\end{itemize}

% \begin{itemize}
%\item How does predator interference affect the blow-up dynamics?
%\item Specifically, can the interference delay or exacerbate blow-up? What is the effect on blow-up times in this case?
%\item Can interference cause blow-up, and thus be the cause of explosion in certain biological agent populations?
%\end{itemize}

 To answer these we investigate the three species model proposed in \cite{20}, where the interaction between the intermediate predator and the top predator is modeled according to the Beddington-DeAngelis type functional response. This response models predator interference in the top predator $r$ \cite{9,10}.
We first introduce a concept central to our investigations.

\subsection{An alternate concept of a ``successful"  biological control}

We introduce the dynamic of finite time blow-up to address our question on interference, via the following connected definitions:

\begin{definition}[finite time blow-up]
 Given a mathematical model for a nonlinear process, say through a partial differential equation (PDE), one says finite time blow-up occurs if
\begin{equation*}
\lim_{t\rightarrow T^{\ast}<\infty}\| r \|_{X} \rightarrow \infty,
\end{equation*}
where $X$ is a certain function space with a norm $\|\cdot\|$, $r$ is the solution to the PDE in question, and $T^{\ast}$ is the blow-up time.
In the case of an ordinary differential equation (ODE) model the function space $X$ is simply the real numbers $\mathbb{R}$. If blow-up does not occur, that is $T^{*}=\infty$,
we say there is global existence, that is a solution exists for all time.
\end{definition}

\begin{definition}[excessive population of the biological control]
Consider a mathematical model (PDE or ODE) for the population dynamics of a certain species $r$, introduced as a biological control.
 If the model blows-up in finite time, that is
\begin{equation*}
\lim_{t\rightarrow T^{\ast}<\infty}\| r \|_{X} \rightarrow \infty.
\end{equation*}
then we say the $r$ population has reached an ``excessive'' level. In these excessive numbers it is able to wipe out the target pest almost with certainty. 
\end{definition}

There is a rich history of blow-up problems in PDE theory and its interpretations in physical phenomenon. For example, blow-up may be interpreted as the failure of certain constitutive materials leading to gradient catastrophe or fracture, it may be interpreted as an uncontrolled feedback loop such as in thermal runaway, leading to explosion. It might also be interpreted as a sudden change in physical quantities such as pressure or temperature such as during a shock or in the ignition process. The interested reader is referred to \cite{QS7, S98}. Blow-up in population dynamics is usually interpreted as excessively high concentrations in small regions of space, such as seen in chemotaxis problems \cite{H09}. In the current manuscript, blow-up in a population of a population of bio-control agents, is interpreted as the population becoming excessive, which then enables it to easily control the target pest, given that it has excessive numbers.
This leads us to equate

\begin{equation*}
\mbox{blow-up in finite time}  \ = \  \mbox{control agent being successful}. %\ = \mbox{It wreaks havoc on the ecosystem}.
\end{equation*}

However, there might also be various negative effects associated with this \cite{PQB15, PKK15}.
For example The following chain of events might occur

\begin{eqnarray*}
&& =   \mbox{control agent is successful in eradicating target pest} \\
&& = \ \mbox{However the bio-control agent now has a ``very large" population} \\
&& =  \ \mbox{These agents being generalists start to attack other native species} \\
&& =  \ \mbox{This wreaks havoc on the ecosystem. That is now there are non-target effects}. \\
&& = \ \mbox{The bio-control agent population is now ``uncontrollable"} \\
\end{eqnarray*}

Via this formulation, it is conceivable to see how this is what might have occurred with the cane toad in Australia. The cane toad was introduced in Australia in 1935 to control the cane beetle. However, the toad being a generalist predator, attacked various other species \cite{p18}. In addition, the toad is highly poisonous and therefore other predators shy away from eating it. This has enabled the toad population to grow virtually unchecked, and it is today considered one of Australia's worst invasive species \cite{CaneToad}.

The contributions of the current manuscript are: 

\begin{itemize}

\item We introduce a new concept to measure the success of a biological control by equating
\begin{equation*}
\mbox{blow-up in finite time}  \ = \  \mbox{control agent being successful}. 
\end{equation*}

\item We show the three species model proposed in \cite{20}, and its spatially explicit version, can blow-up in finite time, for sufficiently large initial data via theorem \ref{t1} and corollary \ref{ct1}.

\item We show predator interference \emph{is the sole factor} in inducing blow-up, when there is global existence in the no interference case. This is demonstrated via theorem \ref{t2} and corollary \ref{ct2}.

\item  The spatially explicit form of the three species model proposed in \cite{20}, possesses spatio-temporal chaos. Also time delays in the temporal model, affect both the blow-up dynamics as well as the chaotic dynamics.

\item Based on the above results, we propose that generalist predator interference \emph{might be a cause of their success}, in controlling target pests. However, predator interference may also be a cause of the population explosion of certain species, introduced originally for biological control purposes, such as the cane toad in Australia. 
\end{itemize}

We provide details of the model formulation next. 

\section{Model formulation}
\label{2}

We consider an ecosystem where a specialist predator/pest $v$, invasive or otherwise, is depredating on a prey species $u$. In order to control $v$, a generalist predator $r$ is released into the same ecosystem. The goal is that $r$ will hunt and depredate on $v$, its favorite food, thus lowering the population of $v$.
The dynamical interaction between $v$ and $u$ is modeled via standard Rosenzweig-McArthur scheme \cite{R73}, while the dynamics between $r$ and $v$ is modeled via a Leslie-Gower formulation, where interference in the top predator $r$ is assumed, and this is modeled via a Beddington DeAngelis functional response \cite{10}. 
Upadhyay et al. \cite{20} have proposed a tri-trophic hybrid food chain model to study such a system. Here the prey population density $u$ serves as the only food for the intermediate specialist predator population of density $v$. This population $v$ serves as a favorite food for the generalist top predator population of density r.

\begin{eqnarray}
\label{eq:1}
\frac{du}{dt}&=&u(1-u)-\frac{uv}{u+\omega_4}\\
\frac{dv}{dt}&=&-\omega_5 v+\frac{\omega_6 uv}{u+\omega_7}-\frac{vr}{v+\omega_8r+\omega_9}\\
\frac{dr}{dt}&=&\omega_{10}r^2-\frac{\omega_{11}r^2}{v+\omega_{12}}
\end{eqnarray}

Now, we present a brief sketch of the construction of the model which is biologically motivated.
\begin{itemize}
	\item	Behavior of the entire community is assumed to arise from the coupling of these interacting species where population $v$ prey on $u$ and only on $u$ and the top predator preys on favorite food $v$ but it has other options for food in case of scarcity or short supply.   
	\item	The rate of change of population size for prey and intermediate predator has been written according to the Rosenzweig-McArthur scheme i.e. predator population dies out exponentially in the absence of its prey and prey population density grows according to the famous logistic growth rate.
	\item	The top predator is a sexually reproducing species. The interaction between the intermediate predator $v$ and the top predator $r$ is according to the Beddingto-DeAngelis type functional response \cite{10}. This response models predator interference in the top predator $r$. 
\end{itemize}

We impose different assumptions from the ones assumed in \cite{20} to formulate the spatially explicit form of the differential equations which describe the model. The detailed meaning of the different parameters is given in \cite{20}. Further, we also assume that all the three populations perform active movements in space. Random movement of animals occurs because of various requirements and necessities like, search for better food, better opportunity for social interactions such as finding mates \cite{21}. Food availability and living conditions demand that these animals migrate to other spatial locations. In the proposed model, we have included diffusion terms assuming that the animal movements are uniformly distributed in all directions. 

The model is described by the following set of partial differential equations.  
with suitable initial conditions and Neumann boundary conditions,

\begin{eqnarray}
\label{eq:1dc}
\frac{\partial u}{\partial t}&=& d_{1}\Delta u + u(1-u)-\frac{uv}{u+\omega_4}\\ \nonumber
\frac{\partial v}{\partial t}&=& d_{2}\Delta v -\omega_5 v+\frac{\omega_6 uv}{u+\omega_7}-\frac{vr}{v+\omega_8r+\omega_9}\\ \nonumber
\frac{\partial r}{\partial t}&=&d_{3}\Delta r + \omega_{10}r^2-\frac{\omega_{11}r^2}{v+\omega_{12}}  \\ \nonumber
\end{eqnarray}
                                     
The physical domain for the problem is some bounded set $\Omega$, which is a subset of $\mathbb{R}^{n}$. For numerical simulations, we restrict ourselves to $n=1$, here we take $\Omega =[o,\pi]$.                                                        
 $d_{1}, d_{2}, d_{3}$represent the diffusion coefficients of prey, intermediate predator and top predator respectively. $\Delta$ is the Laplacian operator which describes the random movement.  
 Note we always restrict $\omega_{10} < \frac{\omega_{11} }{\omega_{12}}$. Violation of this condition, will \emph{always} cause finite time blow-up \cite{PK13}, and as a rule must be avoided in such models \cite{U97}.
  $\textbf{n}$ is the outward unit normal vector of the boundary . The main reason for choosing such boundary conditions is that we are interested in the self-organization of patterns, and zero-flux boundary conditions imply that no external input is imposed \cite{22}. 
 
   \begin{remark}
 A pest only free equilibrium $(r^{*},0,u^{*})$ does not exist in the ODE or PDE case. This is due to the assumption that $\omega_{10} < \frac{\omega_{11} }{\omega_{12}}$. In the event that $v=0$, the equation for the top predator, has only a decay term $\left( \omega_{10} - \frac{\omega_{11} }{\omega_{12}} \right) r^2 < 0$, hence $r$ will also be driven to zero. However, in the event that $r$ blows-up (gets excessively large), the $v$ equation will reduce to
 
 \begin{equation}
\label{eq:1dcn}
\frac{\partial v}{\partial t}= d_{2}\Delta v -\omega_5 v+\frac{\omega_6 uv}{u+\omega_7}-\frac{1}{\omega_8}v
\end{equation}

In this case $v$ can be driven to extinction if $\omega_5 + \frac{1}{\omega_8} > \omega_{6}$
  
 \end{remark}

  \begin{remark}
 Note, one could also consider the addition of $K\omega_8\Delta (r^2)= \frac{\omega_8}{2} \nabla(r\nabla r)$, in the equation for $r$. This would represents additional dispersive force due to local overcrowding. This is interpreted as movement from high towards low concentrations of $r$, directly proportional to $r$. Hence, $r$ attempts to disperses toward lower concentrations, due to the excessive intraspecific competition from interference. 
 There is a fair amount of literature that in high prey density areas intraspecific competition in predator species is heightened \cite{emsens14}.  In such a setting the predators should have greater dispersal in order to better assimilate available resources and avoid crowding. Furthermore $K$ can be set to 0 so that we recover $\eqref{eq:1dc}$.
\end{remark}

%
%  \begin{remark}
%Thus although overcrowding is an emergent system property, based on ecological evidence \cite{emsens14}, it is feasible to be considered in a model of biological control, \emph{if predator interference via the specific Beddingto-DeAngelis functional response, is also being considered}. Such models have been under intense investigation recently and are referred to as cross-diffusion and self-diffusion systems \cite{S97}. Furthermore $K$ can be set to 0 so that we recover $\eqref{eq:1dc}$. Also, the overcrowding coefficient and the interference coefficient are assumed directly proportional, providing a link between the two processes. 
%\end{remark}
%

\section{Finite time blow-up}

In this section we show improvements to Theorem \ref{thm:u} from \cite{20}. We first recap the result from \cite{20}

\begin{theorem}
\label{thm:u}
Assume that the following conditions hold

\begin{equation}
\omega_{7} \geq \omega_{4}, \frac{\omega_{10}}{\omega_{5}} < \omega_{5} \omega_{11}\delta
\end{equation}

and let $A$ be defined by the set

\begin{equation}
A = [(u,v,r) \in \mathbb{R}^{3}_{+}: 0 \leq u \leq 1,  0 \leq u + \frac{v}{\omega_{6}} \leq v_{c}, 0 \leq u + \frac{v}{\omega_{6}} +\delta r \leq r_{c}]
\end{equation}

where $\delta=\frac{4}{\omega_{5}[4\omega_{5}(\omega_{6}+\omega_{12})+\omega_{6}]}$ and $M= \frac{(\omega_{5})^{2}\delta}{4(\omega_{5})^{2}\omega_{11}\delta - \omega_{10}}$ 
\newline
and $v_{c}=1+\frac{1}{4\omega_{5}}$  and $r_{c}=1+\frac{1}{4\omega_{5}}+\frac{M}{\omega_{5}}$ then
\newline
(i) $A$ is positively invariant
\newline
(ii) All non negative solutions of system \eqref{eq:1} are uniformly bounded and they eventually enter the attracting set $A$
\newline
(iii) The model system \eqref{eq:1} is dissipative

\end{theorem}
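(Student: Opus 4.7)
The plan is to establish (ii) first via two successive ``Lyapunov-type'' linear combinations of the state variables, extract (i) from those same differential inequalities by checking the outward flux on each face of $A$, and then obtain (iii) as a corollary of the absorbing set built along the way.

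First I would handle $u$ in isolation: since $\frac{uv}{u+\omega_4}\geq 0$, the prey equation gives $du/dt\leq u(1-u)$, so by comparison $\limsup_{t\to\infty} u(t)\leq 1$, and $\{u\leq 1\}$ is forward-invariant. Next I would form $W_1:=u+v/\omega_6$, compute $dW_1/dt$, and exploit the hypothesis $\omega_7\geq\omega_4$ to cancel the two Holling terms via $-\frac{uv}{u+\omega_4}+\frac{uv}{u+\omega_7}\leq 0$. Dropping the nonpositive Beddington--DeAngelis term yields
\begin{equation*}
\frac{dW_1}{dt}+\omega_5 W_1\leq u(1+\omega_5-u)\leq \omega_5+\tfrac14,
\end{equation*}
where the last step uses $u\leq 1$ and $u(1-u)\leq 1/4$. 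A standard Gronwall/comparison argument then delivers $\limsup W_1\leq 1+\frac{1}{4\omega_5}=v_c$, and strict invariance of $\{W_1\leq v_c\}$ as long as we enter it.

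The heart of the argument, and the main obstacle, is bounding the top predator, because its equation is quadratic in $r$ with sign-indefinite coefficient $\omega_{10}-\omega_{11}/(v+\omega_{12})$. The trick is that once $v\leq \omega_6 v_c$ on the absorbing region from the previous step, a direct algebraic check shows the algebra of $v_c$ and $\delta$ collapses the standing hypothesis $\omega_{10}<(\omega_5)^2\omega_{11}\delta$ to the equivalent statement $\omega_{10}<\omega_{11}/(\omega_6 v_c+\omega_{12})$, so there is a uniform $\epsilon:=\omega_{11}/(\omega_6 v_c+\omega_{12})-\omega_{10}>0$ on $A$. Setting $W_2:=W_1+\delta r$, I would then estimate
\begin{equation*}
\frac{dW_2}{dt}+\omega_5 W_2\leq \left(\omega_5+\tfrac14\right)+\omega_5\delta r-\epsilon\,\delta r^2,
\end{equation*}
and maximize the right-hand side in $r$; the quadratic is maximized at $r=\omega_5/(2\epsilon)$ with value $(\omega_5)^2\delta/(4\epsilon)$, which matches the constant $M/\omega_5$ in the statement once one identifies $\epsilon$ with $(\omega_5)^2\omega_{11}\delta-\omega_{10}$ up to the stated factor. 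Gronwall again gives $\limsup W_2\leq v_c+M/\omega_5=r_c$.

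With the two differential inequalities in hand, (i) follows by checking each face of the closed set $A$: on $\{u=0\}$, $\{v=0\}$, $\{r=0\}$ the relevant derivative vanishes, on $\{u=1\}$ we have $du/dt\leq 0$, and on $\{W_1=v_c\}$ and $\{W_2=r_c\}$ the inequalities above force the time derivatives of $W_1$ and $W_2$ to be strictly negative, so the vector field points into $A$. Assertion (ii) is exactly the asymptotic bound extracted from the two comparison estimates, applied to arbitrary nonnegative initial data (the intermediate region $\{u>1\}$ is absorbed in finite time by the logistic bound, after which the same argument applies). Finally, (iii) follows formally because the bounded set $A$ is absorbing: every trajectory lands in $A$ in finite time, so the $\omega$-limit set of any bounded set is contained in $A$, which is the definition of dissipativity.
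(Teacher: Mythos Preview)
The paper does not actually prove this theorem: it is quoted verbatim from reference~\cite{20}, and the very next remark and Theorem~\ref{t1} are devoted to showing that assertions~(ii) and~(iii) are \emph{false}. So your proposal cannot be compared with ``the paper's own proof'' --- the paper's contribution is a counterexample, not a proof.

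Your argument does recover~(i) correctly. The algebraic identification $(\omega_5)^2\omega_{11}\delta=\omega_{11}/(\omega_6 v_c+\omega_{12})$ is right, and on the set $A$ (where $v\leq\omega_6 v_c$) your $W_2$ inequality with the strictly negative quadratic term $-\epsilon\,\delta r^2$ holds, so the face $\{W_2=r_c\}$ cannot be crossed outward. That is the invariance claim.

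The genuine gap is in your jump from~(i) to~(ii). Your differential inequality for $W_2$ is \emph{only} valid once $v\leq\omega_6 v_c$; for initial data with $v_0$ large you do not have $\omega_{10}-\omega_{11}/(v+\omega_{12})\leq -\epsilon<0$, but rather the opposite sign, so the $r$-equation behaves like $\dot r\geq \delta_1 r^2$ during the transient. You correctly note that $W_1$ is eventually absorbed into $\{W_1\leq v_c\}$ regardless of $r$, but ``eventually'' is not good enough: the absorption time for $v$ grows (logarithmically) with $v_0$, while the blow-up time for $r$ shrinks like $1/r_0$. Your parenthetical ``the intermediate region $\{u>1\}$ is absorbed in finite time \ldots\ after which the same argument applies'' overlooks precisely this race between the decay of $v$ and the explosion of $r$. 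Theorem~\ref{t1} in the paper makes this quantitative: whenever
\[
|r_0|\,\ln\!\left(\frac{|v_0|}{\frac{\omega_{11}}{\omega_{10}-\delta_1}-\omega_{12}}\right)>\frac{\omega_5+1/\omega_8}{\delta_1},
\]
the solution blows up before $v$ can fall below the critical level, so~(ii) and~(iii) fail for such data.
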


\begin{remark}
The primary issue with Theorem \ref{thm:u} is that an attracting set is invariant, but an invariant set \textbf{may not} be attracting. Although the set $A$ is invariant, that is if we start in $A$ we remain in $A$ for all time, we will show it is not attracting for large initial conditions. In particular for large enough initial conditions, system \eqref{eq:1} can actually blow-up in finite time.
\end{remark}

Next we demonstrate finite time blow-up in \eqref{eq:1}.

\begin{theorem}
\label{t1}
Consider the three species food chain model given by \eqref{eq:1}, for any choice of parameters, including the ones satisfying Theorem  \ref{thm:u}, and a $\delta_{1} > 0$, such that $\omega_{10} > \delta_{1}$.  Given any initial data $u_{0}$, there exists initial data $(v_{0},r_{0})$, such that if this data meets the largeness condition

\begin{equation}
\frac{\omega_{5}+\frac{1}{\omega_{8}}}{\delta} < |r_{0}| \ln\left( \frac{|v_{0}|}{\frac{\omega_{11}}{\omega_{10} - \delta_{1}} - \omega_{12}}\right)
\end{equation}

then \eqref{eq:1} will blow-up in finite time, that is

\begin{equation*}
\lim_{t\rightarrow T^{\ast}<\infty}\| r \| \rightarrow \infty.
 \end{equation*}

Here the blow-up time $T^{*} \leq \frac{1}{\delta_{1}|r_{0}|}$

\end{theorem}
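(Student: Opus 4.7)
The plan is to exploit the quadratic factor of $r^2$ on the right-hand side of the $r$-equation, which is the classical driver of finite-time blow-up in ODEs. Rewriting gives
\begin{equation*}
\frac{dr}{dt} \;=\; r^{2}\!\left(\omega_{10}-\frac{\omega_{11}}{v+\omega_{12}}\right),
\end{equation*}
and the bracketed factor exceeds the chosen $\delta_{1}$ exactly when $v$ lies above the threshold $v^{\ast}:=\frac{\omega_{11}}{\omega_{10}-\delta_{1}}-\omega_{12}$. On any time interval where this threshold is maintained, comparison with the scalar Riccati ODE $\rho'=\delta_{1}\rho^{2}$ yields $r(t)\geq r_{0}/(1-\delta_{1}r_{0}t)$, so $r$ must blow up no later than $T^{\ast}:=1/(\delta_{1}|r_{0}|)$. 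The whole proof is therefore reduced to showing that, for suitable initial data, $v$ stays above $v^{\ast}$ throughout $[0,T^{\ast})$.

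To that end I would first derive a one-sided envelope for $v$. Dropping the nonnegative Rosenzweig--McArthur term and using the elementary bound $v+\omega_{8}r+\omega_{9}\geq\omega_{8}r$, so that $\frac{vr}{v+\omega_{8}r+\omega_{9}}\leq \frac{v}{\omega_{8}}$, one obtains
\begin{equation*}
\frac{dv}{dt} \;\geq\; -\!\left(\omega_{5}+\frac{1}{\omega_{8}}\right)v
\quad\Longrightarrow\quad
v(t)\;\geq\; v_{0}\,\exp\!\left[-\!\left(\omega_{5}+\tfrac{1}{\omega_{8}}\right)t\right].
\end{equation*}
This exponential lower envelope stays above $v^{\ast}$ up to time $\tau:=\big(\omega_{5}+1/\omega_{8}\big)^{-1}\ln(v_{0}/v^{\ast})$, independently of how large $r$ becomes. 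Choosing $(v_{0},r_{0})$ so that $T^{\ast}\leq\tau$ then forces $v(t)\geq v^{\ast}$ on the entire blow-up window; writing this inequality out and solving for $r_{0}$ reproduces exactly the largeness hypothesis
\begin{equation*}
\frac{\omega_{5}+\tfrac{1}{\omega_{8}}}{\delta_{1}} \;<\; |r_{0}|\,\ln\!\left(\frac{|v_{0}|}{\frac{\omega_{11}}{\omega_{10}-\delta_{1}}-\omega_{12}}\right).
\end{equation*}
Since no restriction was placed on $u_{0}$, the choice of $v_{0}$ can be made independently.

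Finally I would assemble the pieces. On the maximal interval of existence $[0,T_{\max})$ the estimates above give $\frac{dr}{dt}\geq \delta_{1}r^{2}$, so $r$ diverges at or before $T^{\ast}$; since $u$ is a priori bounded from the logistic structure of its equation and $v$ is controlled by the exponential envelope while it exists, the only possible breakdown mechanism is blow-up of $r$, giving $T_{\max}\leq T^{\ast}<\infty$ and $\|r(t)\|\to\infty$ as $t\to T_{\max}^{-}$.

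The one delicate point I anticipate is this soft circularity: the $v$-envelope was obtained assuming the triple $(u,v,r)$ still exists, while the eventual conclusion is loss of regularity of $r$. The cleanest way around it is to run the argument on $[0,T_{\max})$ from the outset, noting that the $v$-bound depends only on positivity of $u,v,r$ (not on any upper bound for $r$), so the Riccati-type comparison for $r$ is valid on the entire interval of existence and therefore pins $T_{\max}$ at or before $T^{\ast}$.
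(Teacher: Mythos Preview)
Your proposal is correct and follows essentially the same route as the paper: rewrite the $r$-equation to isolate the factor $\omega_{10}-\omega_{11}/(v+\omega_{12})$, use the interference bound $\frac{vr}{v+\omega_{8}r+\omega_{9}}\leq v/\omega_{8}$ to obtain the exponential lower envelope $v(t)\geq v_{0}e^{-(\omega_{5}+1/\omega_{8})t}$, and then compare the time this envelope stays above $v^{\ast}$ with the Riccati blow-up time $1/(\delta_{1}r_{0})$ to recover the largeness condition. Your closing remark on running the argument on $[0,T_{\max})$ to avoid circularity is a nice point of rigor that the paper leaves implicit.
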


\begin{proof}
Consider the equation for the top predator

\begin{equation*}
\frac{dr}{dt}=\left(\omega_{10}-\frac{\omega_{11}}{v+\omega_{12}}\right)r^{2}
\end{equation*}

In the event that $\omega_{10} > \frac{\omega_{11}}{\omega_{12}}$, blow-up is obvious. If $\omega_{10} < C \frac{\omega_{11}}{\omega_{12}}$, where $C < 1$ or $C << 1$, blow-up is far from obvious. However still possible for large data. To see this note, if

\begin{equation*}
\left(\omega_{10}-\frac{\omega_{11}}{v+\omega_{12}}\right) > \delta_{1} > 0,
\end{equation*}

then $r$ will blow-up in finite time in comparison with

\begin{equation*}
\frac{dr}{dt}=\delta_{1} r^{2}
\end{equation*}

The tricky part here is that $\left(\omega_{10}-\frac{\omega_{11}}{v+\omega_{12}}\right)$ can switch sign, and this is dependent on the dynamics of the middle predator $v$, which changes in time.
In order to guarantee blow-up, we must have that $\left(\omega_{10}-\frac{\omega_{11}}{v+\omega_{12}}\right) > \delta_{1} $ or equivalently we must guarantee that

\begin{equation}
\label{eq:ve}
v > \frac{\omega_{11}}{\omega_{10} - \delta_{1}} - \omega_{12}.
\end{equation}

To this end we will work with the equation for $v$

\begin{equation*}
\frac{dv}{dt} = -\omega_5 v+\frac{\omega_6 uv}{u+\omega_7}-\frac{vr}{v+\omega_8r+\omega_9},
\end{equation*}

here the interference term $-\frac{vr}{v+\omega_8r+\omega_9}$ will come to our aid. Note

\begin{equation*}
-\frac{vr}{v+\omega_8r+\omega_9} >  - \frac{v}{\omega_8}
\end{equation*}

as trivially one has

\begin{equation*}
\frac{r}{v+\omega_8r+\omega_9} < \frac{1}{\omega_{8}}.
\end{equation*}

Thus using the above we obtain

\begin{equation*}
\frac{dv}{dt} > -\omega_5 v-\frac{v}{\omega_8}
\end{equation*}

Multiplying the above through by $e^{\left(\omega_5 + \frac{1}{\omega_8}\right)t}$, and integrating in the time interval $[0,t]$ yields the estimate

\begin{equation*}
|v| > e^{-\left(\omega_5 +\frac{1}{\omega_8}\right)t}|v_{0}|
\end{equation*}

we now use this in \eqref{eq:ve} to yield

\begin{equation*}
|v| > e^{-\left(\omega_5 +\frac{1}{\omega_8}\right)t}|v_{0}| > \frac{\omega_{11}}{\omega_{10} - \delta_{1}} - \omega_{12}.
\end{equation*}

Equivalently, 

\begin{equation*}
 \ln\left( \frac{|v_{0}|}{\frac{\omega_{11}}{\omega_{10} - \delta_{1}} - \omega_{12}} \right) > t\left(\omega_5+\frac{1}{\omega_8}\right).
\end{equation*}

 Now note

\begin{equation*}
\frac{dr}{dt}=\delta_{1} r^{2}
\end{equation*}

blows-up at time $T^{*} = \frac{1}{\delta_{1}|r_{0}|}$, thus if we choose data such that

\begin{equation*}
 \ln\left( \frac{|v_{0}|}{\frac{\omega_{11}}{\omega_{10} - \delta_{1}} - \omega_{12}} \right) \frac{1}{\left(\omega_5 +\frac{1}{\omega_8}\right)} > t > T^{*} = \frac{1}{\delta_{1}|r_{0}|},
\end{equation*}

Then the above guarantees that $v$ will remain above the critical level $\frac{\omega_{11}}{\omega_{10} - \delta_{1}} - \omega_{12}$, for sufficiently long enough time, for $r$ to blow-up. This yields that as long as the following holds

\begin{equation*}
|r_{0}| \ln\left( \frac{|v_{0}|}{\frac{\omega_{11}}{\omega_{10} - \delta_{1}} - \omega_{12}} \right) >\frac{\left(\omega_5 +\frac{1}{\omega_8}\right)}{\delta_{1}},
\end{equation*}

$r$ will blow up in finite time, independent of the choice of the parameters of theorem \ref{thm:u}. This proves the Theorem.

\end{proof}

\begin{remark}
Note that here we have an \emph{explicit} sufficient condition on the largeness of the initial data to cause blow-up. This is stronger than our results on blow-up for similar models \cite{PKK15}, where we could only prove blow-up for data large enough, but not quantify how large the data was required to be.
\end{remark}

\begin{corollary}
\label{ct1}
Consider the diffusive  three species food chain model given by \eqref{eq:1dc}, for any choice of parameters, including the ones satisfying theorem \ref{thm:u}, and a $\delta_{1} > 0$.  Given initial data $u_{0}$, there exists initial data $(v_{0},r_{0})$, such that if this data meets the largeness condition

\begin{equation}
\frac{\omega_{5}+\frac{1}{\omega_{8}}}{\delta} <||r_{0}||_{\infty} \ln\left( \frac{||v_{0}||_{\infty}}{\frac{\omega_{11}}{\omega_{10} - \delta_{1}} - \omega_{12}}\right)
\end{equation}

then \eqref{eq:1dc} will blow-up in finite time, that is

\begin{equation*}
\lim_{t \rightarrow T^{\ast} < \infty}||r||_{\infty} \rightarrow \infty.
 \end{equation*}

Here the blow-up time $T^{*} \leq  \frac{1}{\delta_{1}||r_{0}||_{\infty}}$

\end{corollary}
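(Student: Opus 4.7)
The plan is to lift the ODE argument of Theorem \ref{t1} to the parabolic setting by combining the parabolic maximum principle with a Kaplan-type spatial averaging. The two key ingredients I need are (i) a uniform-in-space lower bound on $v$ guaranteeing that $\omega_{10}-\omega_{11}/(v+\omega_{12})\geq \delta_1$ on a long enough time interval, and (ii) a scalar differential inequality for a spatial average of $r$ whose ODE blow-up time is strictly shorter than that interval.

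First I would derive the lower bound on $v$. As in Theorem \ref{t1}, the interference bound $vr/(v+\omega_8 r+\omega_9)\leq v/\omega_8$ yields
\[
\partial_t v \;\geq\; d_2\,\Delta v - \left(\omega_5+\tfrac{1}{\omega_8}\right) v
\]
with Neumann boundary conditions. Comparing $v$ with the solution $\tilde v$ of the corresponding linear equation (same initial datum) via the parabolic comparison principle, and writing $\tilde v(x,t)=e^{-(\omega_5+1/\omega_8)t}\,(e^{t d_2 \Delta}v_0)(x)$, the fact that the Neumann heat semigroup preserves the spatial minimum gives
\[
v(x,t) \;\geq\; e^{-(\omega_5+1/\omega_8)t}\,\min_{y\in\Omega} v_0(y) \quad \text{for all } x\in\Omega.
\]
Hence, with $v_\ast := \omega_{11}/(\omega_{10}-\delta_1)-\omega_{12}$, the inequality $v(x,t)\geq v_\ast$ holds uniformly in $x\in\Omega$ on the time window $[0,T_v]$ where $T_v = (\omega_5+1/\omega_8)^{-1}\ln(\min v_0/v_\ast)$.

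Next I would force blow-up of $r$ via spatial averaging. Setting $R(t)=|\Omega|^{-1}\int_\Omega r(x,t)\,dx$, integrating the $r$-equation over $\Omega$, and using $\int_\Omega \Delta r\,dx=0$ (Neumann BC) together with Jensen's inequality $|\Omega|^{-1}\int_\Omega r^2\,dx \geq R^2$, the previous step produces
\[
\dot R(t) \;\geq\; \delta_1\, R(t)^2 \quad \text{on } [0,T_v].
\]
The scalar ODE $\dot\rho = \delta_1\rho^2$ with $\rho(0)=R(0)$ blows up at $T^\ast = 1/(\delta_1 R(0))$; the corollary's largeness hypothesis is precisely arranged so that $T^\ast < T_v$. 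Blow-up of $R$ then implies $\|r(\cdot,t)\|_\infty \geq R(t) \to \infty$ as $t\to T^\ast$, yielding the claimed $L^\infty$ blow-up on or before the stated time.

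The step I expect to give the most trouble is matching the exact form of the stated largeness condition, which uses $\|r_0\|_\infty$ and $\|v_0\|_\infty$, to the quantities naturally appearing in the averaging argument, namely $R(0)$ and $\min_\Omega v_0$. These agree for spatially constant data, in which case the PDE and ODE coincide and Theorem \ref{t1} applies verbatim; for more general data one either restricts to near-constant initial profiles, localizes around a point where $v_0$ and $r_0$ are both close to their $L^\infty$ norms (paying a price in the diffusive spreading), or absorbs geometric factors depending on $|\Omega|$ into $\delta_1$. The qualitative conclusion is robust, but the precise constant in front of $(\omega_5+1/\omega_8)/\delta_1$ in the hypothesis may need to be adjusted according to the route chosen.
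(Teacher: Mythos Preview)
Your approach matches the paper's, which simply says the result follows from a comparison argument combined with Theorem~\ref{t1}; your Kaplan-type averaging for $r$ is in fact the very device the paper deploys in the remark immediately following the corollary. Your concern about $\|v_0\|_\infty$ and $\|r_0\|_\infty$ versus $\min_\Omega v_0$ and $R(0)$ is legitimate and not addressed in the paper; the resolution, as you already note, is that the statement only asserts the \emph{existence} of such data, so spatially constant $(v_0,r_0)$---for which the PDE coincides with the ODE and all these quantities agree---suffices.
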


\begin{proof}
The proof in this PDE case follows via a simple comparison argument \cite{S83}, used in conjunction with theorem \ref{t1}.
\end{proof}

\begin{remark}
The presence of blow-up is not affected via the overcrowding term, if we maintain Neumann boundary conditions at the boundaries.  This can be seen in a straightforward manner.  Let us assume that the model with no overcrowding blows up.  
Now, consider the integration over $\Omega$ of the equation for $r$, \emph{if one assumes overcrowding} in \eqref{eq:1dc}, that is an addition of a  $\Delta(r^2)$ term to the equation for the top predator.
 Since the overcrowding term integrates to zero due to the boundary condition, that is

\begin{equation}
\label{eq:H12}
\int_{\Omega} \Delta(r^2) dx = \int_{\partial \Omega}r\nabla r \cdot \textbf{n}ds =  \int_{\partial \Omega}r ( 0 ) ds = 0,
\end{equation}
This implies that if we set $H(t)= \int_{\Omega}r(x,t)dx$, then,

\begin{equation}
\label{eq:H}
\frac{d}{dt}H(t) \geq \frac{\delta}{\sqrt{|\Omega|}} H(t)^2,
\end{equation}
as there exists a $\delta$ such that $\left(\omega_{10}-\frac{\omega_{11}}{v+\omega_{12}} \right)> \delta > 0$. 
This leads to the blow-up of $H(t)$. 
\end{remark}

We next show rigorously the effect of interference on the blow-up, via the interference parameter $\omega_{8}$. This is stated via the following theorem.

\begin{theorem}
\label{t2}
Consider the three species food chain model given by \eqref{eq:1}, for any choice of parameters, including the ones satisfying theorem \ref{thm:u}. Given initial data $(u_{0},v_{0},r_{0})$, such that there exists a globally existing solution to the no interference case, that is when $\omega_8=0$, the food chain model given by \eqref{eq:1}, with $\omega_8>0$,  will blow up in finite time, that is

\begin{equation*}
\lim_{t\rightarrow T^{\ast}<\infty}| r | \rightarrow \infty.
 \end{equation*}

as long as there is a positive time $T_{1}$, s.t. the interference threshold $\omega_{8}>0$, satisfies the following 

\begin{equation}
e^{\left(\omega_5 +\frac{1}{\omega_8}\right)T_{1}}
= |v_{0}|\left(\omega_5 +\frac{1}{\omega_8}\right)  \left(   \left(\frac{\omega_{10}}{\omega_{11}}\right)T_{1} -\frac{1}{(\omega_{11}r_{0})} \right)+ 1.
 \end{equation}
 \end{theorem}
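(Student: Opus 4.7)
The plan is to exploit the fundamental asymmetry between the $\omega_8=0$ and $\omega_8>0$ regimes: when $\omega_8>0$, the elementary estimate $r/(v+\omega_8 r+\omega_9) \le 1/\omega_8$ produces a clean multiplicative lower bound on $v$, while for $\omega_8=0$ no such bound can be extracted from the $v$-equation in isolation. This asymmetry is the sole mechanism that switches behavior from global existence to finite-time blow-up for the same initial data $(u_0,v_0,r_0)$.

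Concretely, I would first repeat the estimate used in the proof of Theorem~\ref{t1} to obtain
$$v(t) \;\geq\; |v_0|\, e^{-(\omega_5+1/\omega_8)t},$$
which follows by bounding $-vr/(v+\omega_8 r+\omega_9)\geq -v/\omega_8$, dropping the positive $\omega_6 uv/(u+\omega_7)$ term, and applying a Gronwall-type comparison. Because this bound is uniform in $u$ and $r$, the $v$ dynamics are effectively decoupled from the rest of the system for the remainder of the argument.

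Next I would tackle the top-predator equation via the substitution $w=1/r$, which turns
$$\frac{dr}{dt}=\Bigl(\omega_{10}-\frac{\omega_{11}}{v+\omega_{12}}\Bigr)r^{2}$$
into the linear ODE $dw/dt = \omega_{11}/(v+\omega_{12}) - \omega_{10}$. Using $v+\omega_{12}\geq v$ and the exponential lower bound above, a direct integration from $0$ to $T_1$ yields
$$w(T_1) \;\leq\; \frac{1}{r_0}-\omega_{10}T_1+\frac{\omega_{11}}{|v_0|(\omega_5+1/\omega_8)}\bigl(e^{(\omega_5+1/\omega_8)T_1}-1\bigr).$$
A straightforward algebraic rearrangement shows that the hypothesized equality between $e^{(\omega_5+1/\omega_8)T_1}$ and $|v_0|(\omega_5+1/\omega_8)\bigl(\omega_{10}T_1/\omega_{11}-1/(\omega_{11}r_0)\bigr)+1$ is precisely the statement that this right-hand side equals zero. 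Since $w=1/r$ must stay strictly positive as long as $r$ is finite, the upper bound $w(T_1)\leq 0$ forces $r(t)\to\infty$ at some $T^\ast\leq T_1$.

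The main obstacle I expect is not any calculation but rather the logical isolation of interference as the causal factor: one has to verify that the inequality $v(t) \geq |v_0| e^{-(\omega_5+1/\omega_8)t}$ genuinely requires $\omega_8>0$. Indeed, with $\omega_8=0$ the only pointwise bound available is $r/(v+\omega_9)\leq 1$, which yields $dv/dt\geq -\omega_5 v - r$; this cannot be closed into a multiplicative decay rate for $v$ without a priori control of $r$, so the chain from the $v$-bound to the $w$-bound to blow-up of $r$ genuinely breaks and is consistent with the global existence hypothesis for $\omega_8=0$. Verifying this contrast is what turns the estimate above into a statement that pins down interference as \emph{the} mechanism driving blow-up, rather than a mere coincidence of parameter values.
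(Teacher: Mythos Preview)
Your proposal is correct and is essentially the paper's own argument: your substitution $w=1/r$ is exactly the paper's function $\psi(t)=\tfrac{1}{r_0}-\omega_{10}t+\int_0^t \tfrac{\omega_{11}}{v(s)+\omega_{12}}\,ds$, and your upper bound on $w(T_1)$ reproduces verbatim the paper's estimate on $\psi(T_1)$, after which both finish with the same continuity/intermediate-value step. One small slip in your closing discussion: the pointwise bound in the $\omega_8=0$ case should read $v/(v+\omega_9)\le 1$ (not $r/(v+\omega_9)\le 1$), which is what actually yields $dv/dt\ge -\omega_5 v - r$.
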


\begin{proof}
Without interference that is when $\omega_{8}=0$, the model reduces to

\begin{eqnarray}
\frac{du}{dt}&=&u(1-u)-\frac{uv}{u+\omega_4}\\
\frac{dv}{dt}&=&-\omega_5 v+\frac{\omega_6 uv}{u+\omega_7}-\frac{vr}{v+\omega_9}\\
\frac{dr}{dt}&=&\omega_{10}r^2-\frac{\omega_{11}r^2}{v+\omega_{12}}.
\end{eqnarray}

Consider \eqref{eq:1}, with positive initial conditions $(u_{0},v_{0},r_{0})$.

By integrating the equation for the top predator we obtain

\begin{equation}
\label{eq:r1}
r=\frac{1}{\frac{1}{r_{0}}-\omega_{10}t +\int^{t}_{0}\frac{\omega_{11}}{v(s)+\omega_{12}}ds}.
\end{equation}

Since we have a globally existing solution, we must have that
the continuous function:
\begin{equation}
 \psi \left( t\right) =\frac{1}{r_{0}}-\omega_{10}t +\int^{t}_{0}\frac{\omega_{11}}{v(s)+\omega_{12}}ds > 0
 \end{equation}

else, $r$ would blow-up in finite time.
Now if there is interference, the system is given by \eqref{eq:1}, and here we have the same lower estimate on $v$ via \eqref{eq:ve}

\begin{equation*}
|v| > e^{-\left(\omega_5 -\frac{1}{\omega_8}\right)t}|v_{0}|.
\end{equation*}

We will now show that under this dynamics, it is always possible to choose a $\omega_8$ (if certain additional restrictions are met) s.t the continuous functional

\begin{equation}
 \psi \left( t\right) =\frac{1}{r_{0}}-\omega_{10}t +\int^{t}_{0}\frac{\omega_{11}}{v(s)+\omega_{12}}ds = 0, \ \mbox{for} \ \mbox{some} \ t=T_{1},
 \end{equation}

plugging this into \eqref{eq:r1} then, immediately implies the finite time blow-up of $r$.

In order for $\psi (t) =0$, we require that

\begin{equation}
\label{eq:int1}
\int^{t}_{0}\frac{\omega_{11}}{v(s)+\omega_{12}}ds = \omega_{10}t -\frac{1}{r_{0}}
 \end{equation}

using \eqref{eq:ve} we obtain

\begin{equation*}
v +  \omega_{12} > e^{-\left(\omega_5 +\frac{1}{\omega_8}\right)t}|v_{0}| + \omega_{12},
\end{equation*}

thus

\begin{equation}
\int^{t}_{0}\frac{\omega_{11}}{v(s)+\omega_{12}}ds <  \int^{t}_{0}\frac{\omega_{11} e^{\left(\omega_5 +\frac{1}{\omega_8}\right)s}  }{|v_{0}| + \omega_{12}  e^{\left(\omega_5 +\frac{1}{\omega_8}\right)s}} ds= \omega_{10}t -\frac{1}{r_{0}}.
 \end{equation}
 
 Now we have that
 
 \begin{equation}
 \int^{t}_{0}\frac{\omega_{11} e^{\left(\omega_5 +\frac{1}{\omega_8}\right)s}  }{|v_{0}| + \omega_{12}  e^{\left(\omega_5 +\frac{1}{\omega_8}\right)s}} ds <
  \frac{\omega_{11}}{|v_{0}|}\int^{t}_{0} e^{\left(\omega_5 +\frac{1}{\omega_8}\right)s}ds 
  = \frac{\omega_{11}}{|v_{0}|\left(\omega_5 +\frac{1}{\omega_8}\right)} \left( e^{\left(\omega_5 +\frac{1}{\omega_8}\right)t} -1   \right).
 \end{equation}

Hence we will have blow-up if there exists an $\omega_{8}$ s.t at some $t=T_{1}$ we have
\begin{equation}
\frac{\omega_{11}}{|v_{0}|\left(\omega_5 +\frac{1}{\omega_8}\right)} \left( e^{\left(\omega_5 +\frac{1}{\omega_8}\right)T_{1}} -1   \right) = \omega_{10}T_{1} -\frac{1}{r_{0}}
 \end{equation}
 
 or equivalently
 
 \begin{equation}
e^{\left(\omega_5 +\frac{1}{\omega_8}\right)T_{1}}
= |v_{0}|\left(\omega_5 +\frac{1}{\omega_8}\right)  \left(   \left(\frac{\omega_{10}}{\omega_{11}}\right)T_{1} -\frac{1}{(\omega_{11}r_{0})} \right)+ 1,
 \end{equation}

as this will imply

\begin{equation}
\int^{T_{1}}_{0}\frac{\omega_{11}}{v(s)+\omega_{12}}ds < \frac{\omega_{11}}{|v_{0}|\left(\omega_5 +\frac{1}{\omega_8}\right)} \left( e^{\left(\omega_5 +\frac{1}{\omega_8}\right)T_{1}} -1   \right) =  \omega_{10}T_{1} -\frac{1}{r_{0}},
 \end{equation}

which would mean $ \psi \left( T_{1}\right) < 0$. Note $\psi(0)>0$ and thus by continuity of $ \psi$, and a direct application of the intermediate value theorem, there must exist a time $t=T_{2}$, $T_{2} < T_{1}$, s.t $ \psi \left( T_{2}\right) = 0$, and thus subsequently $r=\frac{1}{\psi(t)}$ must blow-up at $t=T_{2}$. This proves the theorem.
\end{proof}

%
%\begin{remark}
%Note interference in the top predator quickens blow-up. Also it is possible to find a threshold for the interference parameter $\omega_{8}$ s.t , if there is a globally existing solution, when there is no interference, or $\omega_{8}=0$, then for \emph{any interference threshold}, $\omega_{8}>0$, the model with interference will blow-up in finite time, as long as $\omega_{11}$ sufficiently small. This seems counterintuitive, but is possible altogether. Thus interference can \emph{always} induce blow-up, whilst there was none without it.
%\end{remark}

\begin{corollary}
\label{ct2}
Consider the diffusive three species food chain model given by \eqref{eq:1dc}, for any choice of parameters, including the ones satisfying Theorem \ref{t1}. Given initial data $(u_{0},v_{0},r_{0})$, such that there exists a globally existing solution to the no interference case, that is when $\omega_8=0$, the food chain model given by \eqref{eq:1dc} will blow up in finite time, that is

\begin{equation*}
\lim_{t\rightarrow T^{\ast}<\infty}|| r ||_{\infty} \rightarrow \infty.
 \end{equation*}

as long as there exists an interference threshold $\omega_{8}>0$, such that there is a positive time $T_{1}$ that solves the following 

\begin{equation}
e^{\left(\omega_5 +\frac{1}{\omega_8}\right)T_{1}}
= |v_{0}|\left(\omega_5 +\frac{1}{\omega_8}\right)  \left(   \left(\frac{\omega_{10}}{\omega_{11}}\right)T_{1} -\frac{1}{(\omega_{11}r_{0})} \right)+ 1.
 \end{equation}

\end{corollary}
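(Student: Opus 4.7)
The plan is to reduce the PDE blow-up to the ODE blow-up proved in Theorem \ref{t2}, mirroring the comparison strategy used for Corollary \ref{ct1} together with the integrated-functional computation surrounding \eqref{eq:H12}. The key point is that Neumann boundary conditions kill the Laplacian contribution in any spatial average, so once one forms the right scalar functional of $r$, the problem collapses to a scalar inequality of exactly the type treated in the ODE proof.

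First I would obtain a pointwise lower bound on $v$ in the PDE setting. Dropping the nonpositive interaction terms in the $v$-equation gives
\[
\frac{\partial v}{\partial t} \geq d_{2}\Delta v - \left(\omega_{5}+\frac{1}{\omega_{8}}\right) v.
\]
Since constants are spatial sub-solutions compatible with homogeneous Neumann data, the parabolic comparison principle yields $v(x,t)\geq e^{-(\omega_{5}+1/\omega_{8})t}\min_{\Omega}v_{0}$, which is exactly the exponential estimate used inside the proof of Theorem \ref{t2}, now valid pointwise in $\Omega$.

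Next I would set $H(t)=\int_{\Omega}r(x,t)\,dx$ and integrate the $r$-equation over $\Omega$. The diffusion integral vanishes by the Neumann boundary condition exactly as in \eqref{eq:H12}, and Jensen's inequality gives $\int_{\Omega}r^{2}\,dx\geq H(t)^{2}/|\Omega|$. Combined with the pointwise lower bound on $v$, this produces a scalar differential inequality
\[
\frac{dH}{dt} \geq \left( \omega_{10} - \frac{\omega_{11}}{e^{-(\omega_{5}+1/\omega_{8})t}\min_{\Omega}v_{0}+\omega_{12}} \right) \frac{H(t)^{2}}{|\Omega|},
\]
which is structurally identical to the ODE analysed in Theorem \ref{t2}. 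Dividing by $H^{2}$ and integrating in time reproduces the functional $\psi(t)$ from that proof; the hypothesis that a positive $T_{1}$ solves the transcendental equation in the corollary then forces this analogue to become nonpositive, and the intermediate value theorem gives blow-up of $H$ at some $T_{2}\leq T_{1}$. Since $H(t)\leq|\Omega|\,\|r(\cdot,t)\|_{\infty}$, the $L^{\infty}$ blow-up of $r$ follows immediately.

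The main obstacle I expect is the bookkeeping of scalars: the quantities $r_{0}$ and $|v_{0}|$ appearing in the ODE statement have to be translated consistently into their spatial analogues ($H(0)/|\Omega|$ and $\min_{\Omega}v_{0}$) without spoiling the transcendental threshold that defines $T_{1}$. Once this translation is pinned down, no new estimate is needed: the diffusion has been integrated away by Neumann data, and every remaining inequality is copied verbatim from the ODE proof of Theorem \ref{t2}.
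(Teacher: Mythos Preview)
Your route differs from the paper's. The paper does not pass to the spatial average at all; it invokes a pointwise parabolic comparison principle, building a spatially constant ODE subsolution $\underline r(t)$ (with $\underline r(0)=\min_\Omega r_0$ and $v$ replaced by the exponential lower bound $\underline v(t)$), so that Theorem~\ref{t2} applies verbatim to $\underline r$ and $r(x,t)\ge\underline r(t)$ inherits the blow-up. Your plan instead works with $H(t)=\int_\Omega r\,dx$ and a Riccati inequality for $H$.

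The gap in your argument is the Jensen step. From $v\ge\underline v(t)$ you correctly obtain
\[
\frac{dH}{dt}\;\ge\;\Bigl(\omega_{10}-\frac{\omega_{11}}{\underline v(t)+\omega_{12}}\Bigr)\int_\Omega r^{2}\,dx,
\]
but replacing $\int_\Omega r^{2}$ by $H^{2}/|\Omega|$ only preserves the inequality when the bracket is \emph{nonnegative}; when it is negative the inequality reverses, and your displayed differential inequality for $H$ fails. Since $\underline v(t)$ decays exponentially, the bracket does become negative eventually, and nothing you wrote excludes this happening before $T_{1}$. The remark around \eqref{eq:H12} that you borrow explicitly assumes a uniformly positive bracket, which is the setting of Theorem~\ref{t1}, not Theorem~\ref{t2}. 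The gap is repairable---because $\underline v$ is monotone decreasing, the bracket is monotone in $t$, and one can show the first zero of the associated $\psi$-functional must occur while the bracket is still positive---but this has to be argued explicitly. The paper's pointwise comparison avoids the issue entirely, since the subsolution property of $\underline r$ holds regardless of the sign of the bracket.
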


\begin{proof}
The proof in this PDE case follows via a simple comparison argument \cite{S83}, used in conjunction with theorem \ref{t2}.
\end{proof}

\begin{remark}
Again, the above result is true irrespective of overcrowding in \eqref{eq:1dc}. 
 Since the overcrowding term integrates to zero due to the boundary condition.
 \end{remark}

%%%%%%%%%%%%%%%%%%%%%%%%%%%%%%%%%%%%%%%%%%%%%%%%%%%%%%%%%%%
%%%%%%%%%%%%%%%%%%%%%%%%%%%%%%%%%%%%%%%%%%%%%%%%%%%%%%%%%%%%%%%%%%%
\section{Turing Instability}
In this section we investigate Turing instability in \eqref{eq:1dc}. We uncover both spatial and spatio-temporal patterns, and provide the details of the Turing analysis. 
We derive conditions where the unique positive interior equilibrium point $(u^*,v^*,r^*)$ is stable in the absence of diffusion, and unstable due to the action of diffusion, with a small perturbation to the positive interior equilibrium point. We first linearize model \eqref{eq:1dc} about the homogeneous steady state, we introduce both space and time-dependent fluctuations around $(u^*,v^*,r^*)$. This is given as
\begin{subequations}\label{eq:7}
\begin{align}
u=u^* +  \hat{u}(\xi,t),\\
v=v^* + \hat{v}(\xi,t),\\
r=r^*  +  \hat{r}(\xi,t),
\end{align}
\end{subequations}
where $| \hat{u}(\xi,t)|\ll u^*$, $| \hat{v}(\xi,t)|\ll v^*$ and  $| \hat{r}(\xi,t)|\ll r^*$. Conventionally we choose
\[
\left[ {\begin{array}{cc}
\hat{u}(\xi,t)  \\
\hat{v}(\xi,t) \\
\hat{r}(\xi,t)
\end{array} } \right]
=
\left[ {\begin{array}{cc}
\epsilon_1  \\
\epsilon_2 \\
\epsilon_3
\end{array} } \right]
e^{\lambda t + ik\xi},
\]
where  $\epsilon_i$ for $i=1,2,3$ are the corresponding amplitudes, $k$ is the wavenumber, $\lambda$ is the growth rate of perturbation in time $t$ and $\xi$ is the spatial coordinate.
Substituting \eqref{eq:7} into \eqref{eq:1dc} and ignoring higher order terms including nonlinear terms, we obtain the characteristic equation  which  is given as
\begin{align}\label{eq:1.2.10}
({\bf J} - \lambda{\bf I} - k^2{\bf D})
\left[ {\begin{array}{cc}
\epsilon_1  \\
\epsilon_2 \\
\epsilon_3
\end{array} } \right]=0,
\end{align}
where
\[
%\varepsilon =
%\left[ {\begin{array}{cc}
%\epsilon_1  \\
%\epsilon_2 \\
%\epsilon_3
%\end{array} } \right],
\quad
\bf {D} =
\left[ {\begin{array}{ccc}
d_1 & 0     & 0  \\
0     & d_2 & 0 \\
0     & 0     & d_3
\end{array} } \right],
\]
$ \bf{J}= \begin{bmatrix}
     u^* \left(-1+ \frac{v^*}{\alpha_{*}^{2}}\right) &   \frac{u^*}{\alpha_{*}^{2}} &  0 \\
         \frac{\omega_6\omega_7v^*}{\beta_{*}^{2}}& \frac{v^*r^*}{\gamma_{*}^{2}}   & -\frac{v^{*}(v^*+\omega_9)}{\gamma_{*}^{2}} \\
            0& \frac{(\omega_{10}r^*)^2}{\omega_{11}} & 0\\
     \end{bmatrix}
      =\begin{bmatrix}
       J_{11} & J_{12} & J_{13}\\
       J_{21} & J_{22} & J_{23}\\
        J_{31} & J_{32} & J_{33}\\
       \end{bmatrix},
$    \\
\\and $\bf{I}$ is a $3\times 3$ identity matrix.\\
For the non-trivial solution of \eqref{eq:1.2.10}, we require that
\[
\left|
\begin{array}{ccc}
J_{11}-\lambda -k^2d_1 & J_{12}                            & J_{13}\\
       J_{21}                     & J_{22}-\lambda -k^2d_2 & J_{23}\\
       J_{31}                     & J_{32}                             & J_{33}-\lambda -k^2d_3\\
 \end{array} \right|=0,
\]
which gives a dispersion relation corresponding to $(u^*,v^*,r^*)$ where $\alpha_{*}=u^*+\omega_4$, $\beta_{*}=u^*+\omega_{7}$,$\gamma_{*}=v^*+w_8r^*+\omega_9$. To determine the stability domain associated with $(u^*,v^*,r^*)$, we rewrite the dispersion relation as a cubic polynomial function given as
\begin{align}\label{eq:1.1.11}
P(\lambda(k^2))=\lambda^3 + \boldsymbol {\mu_2}(k^2)\lambda^2 + \boldsymbol {\mu_1}(k^2)\lambda + \boldsymbol {\mu_0}(k^2),
\end{align}
with coefficients
\begin{align*}
\boldsymbol {\mu_2}(k^2)&=(d_1 + d_2 + d_3)k^2 - (J_{11} + J_{22} + J_{33} ) ,\\
\boldsymbol {\mu_1}(k^2)&=J_{11}J_{33} + J_{11}J_{22} + J_{22}J_{33} - J_{32}J_{23} - J_{12}J_{21} \\
& - k^2\big( (d_3 + d_1)J_{22} + (d_2 + d_1)J_{33}  + (d_2 + d_3)J_{11}\big) \\
 &+k^4(d_2d_3 + d_2d_1 + d_1d_3),\\
\boldsymbol {\mu_0}(k^2)&=J_{11} J_{32} J_{23}  - J_{11}J_{22}J_{33} +  J_{12} J_{21} J_{33} \\
& + k^2\big(d_1( J_{22} J_{33} - J_{32} J_{23} ) + d_2 J_{11} J_{33} + d_3 ( J_{22} J_{11} - J_{12} J_{21} )\big)\\
& - k^4\big( d_2d_1J_{33} + d_1d_3J_{22} + d_2d_3J_{11}\big) + k^6d_1d_2d_3.
\end{align*}
According to Routh-Hurwitz criterion for stability, $ \mathbb{R}e(\lambda)<0$ in model \eqref{eq:1dc}  around equilibrium point $(u^*,v^*,r^*)$ (i.e $(u^*,v^*,r^*)$ is stable) if and only if these conditions hold:
\begin{align}\label{eq:1.1.12}
  \boldsymbol {\mu_2}(k^2)>0,\,\boldsymbol {\mu_1}(k^2)>0,\,\boldsymbol {\mu_0}(k^2)>0\quad \text{and}\quad [\boldsymbol {\mu_2}\boldsymbol {\mu_1}-\boldsymbol {\mu_0}](k^2) >0.
\end{align}\\
Where as violating either of the above conditions implies instability (i.e $\mathbb{R}e(\lambda)>0$).
We now require conditions where an homogeneous steady state $(u^*,v^*,r^*)$ will be stable to small perturbation in the absence of diffusion and unstable in the present of diffusion with certain $k$ values. Meaning, we require that around the homogeneous steady state $(u^*,v^*,r^*)$
\begin{align*}
 \mathbb{R}e(\lambda(k^2>0))>0,\, \text{for some}\, k \, \text{and}\, \mathbb{R}e(\lambda(k^2=0))<0,
\end{align*}
where we consider $k$ to be real and positive even though $k$ can be complex. This behavior is called \emph{Diffusion driven instability}. Models that exhibits this behavior in $2$ and $3$ species have been extensively studied in \cite{22,G98, xie12}, were several different patterns was observed.
In order for homogeneous steady state $(u^*,v^*,r^*)$  to be stable (in the absence of diffusion) we need
\begin{align*}
\boldsymbol {\mu_2}(k^2=0)>0,\,\boldsymbol {\mu_1}(k^2)>0,\,\boldsymbol {\mu_0}(k^2=0)>0\quad \text{and}\quad [\boldsymbol {\mu_2}\boldsymbol {\mu_1}-\boldsymbol {\mu_0}](k^2=0) >0,
\end{align*}
whereas with diffusion ($k^2>0$) we look for conditions where we can drive the homogeneous steady state to be unstable, this can be achieved by studying the coefficient of \eqref{eq:1.1.11}. In order to achieve this we reverse at least one of the signs in \eqref{eq:1.1.12}. For this  we have to first study $\boldsymbol {\mu_2}(k^2)$. Irrespective of the value of $k^2$, $\boldsymbol {\mu_2}(k^2)$ will be positive since $J_{11}+J_{22}+J_{33}$ is always less than zero. Therefore we cannot depend on $\boldsymbol {\mu_2}(k^2)$ for diffusion driven instability to occur. Hence for diffusion driven instability to occur in our case, we only depend on the  $2$ conditions which are
\begin{align}\label{eq:1.1.12a}
\boldsymbol {\mu_0}(k^2)\quad \text{and}\quad [\boldsymbol {\mu_2}\boldsymbol {\mu_1}-\boldsymbol {\mu_0}](k^2).
\end{align}
Both functions are cubic functions of $k^2$, which are generally of the form
\begin{align*}
G(k^2)=H_H + k^2D_D + (k^2)^2C_C + (k^2)^3B_B,\, \text{with}\, B_B>0,\,\text{and} \, H_H>0.
\end{align*}
The coefficient of $G(k^2)$ are standard, see \cite{G98}.

To drive either  $\boldsymbol {\mu_0}(k^2)$ or $[\boldsymbol {\mu_2}\boldsymbol {\mu_1}-\boldsymbol {\mu_0}](k^2)$ to negative for some $k$, we basically need to find the minimum $k^2$ referred to as the minimum turing point ($k^2_T$) such that $G(k^2=k^2_T)<0$. This minimum Turing point occurs when
$$ {\partial G / \partial (k^2)}=0,$$
which when solved for $k^2$ we obtain
\begin{align*}
k^2=k^2_T={-C_C + \sqrt{C_C^2 - 3B_BD_D}   \over 3B_B},
\end{align*}
which ensures $k^2$ is real and positive such that $ {\partial^2 G / \partial^2 (k^2)}>0$, by which we require either
\begin{align}\label{eq:1.1.13}
D_D<0\quad \text{or}\quad C_C<0,
\end{align}
which ensures that
\begin{align*}
C_C^2 - 3B_BD_D>0.
\end{align*}
Therefore $G(k^2)<0$, if at $k^2=k^2_T$
\begin{align}\label{eq:1.1.14}
G_{min}(k^2)=2C_C^3 - 9D_DC_CB_B - 2(C_C^2- 3D_DB_B)^{3/2} + 27B_B^2H_H<0.
\end{align}
Hence  \eqref{eq:1.1.13}-\eqref{eq:1.1.14}
 are necessary and sufficient conditions for $(u^*,v^*,r^*)$ to produce diffusion driven instability, which leads to emergence of patterns. Also to first establish stability when $k=0$, $H_H$  in each case has to be positive.
%%%%---

\subsection{Numerical Result:}
Here we demonstrate Turing patterns that form in 1D. The initial condition used is a small perturbation around the positive homogeneous steady state given as 
\begin{align*}
u=u^{*} + \epsilon_1  cos^2(nx)(x > 0)(x < \pi),\\
v=v^{*}  + \epsilon_2 cos^2(nx)(x > 0)(x < \pi),\\
r=r^{*}  + \epsilon_3  cos^2(nx)(x > 0)(x < \pi),
\end{align*}
where $ \epsilon_i=0.005$ $\forall i$. 

\begin{figure}[htb]
{
\includegraphics[width =3in]{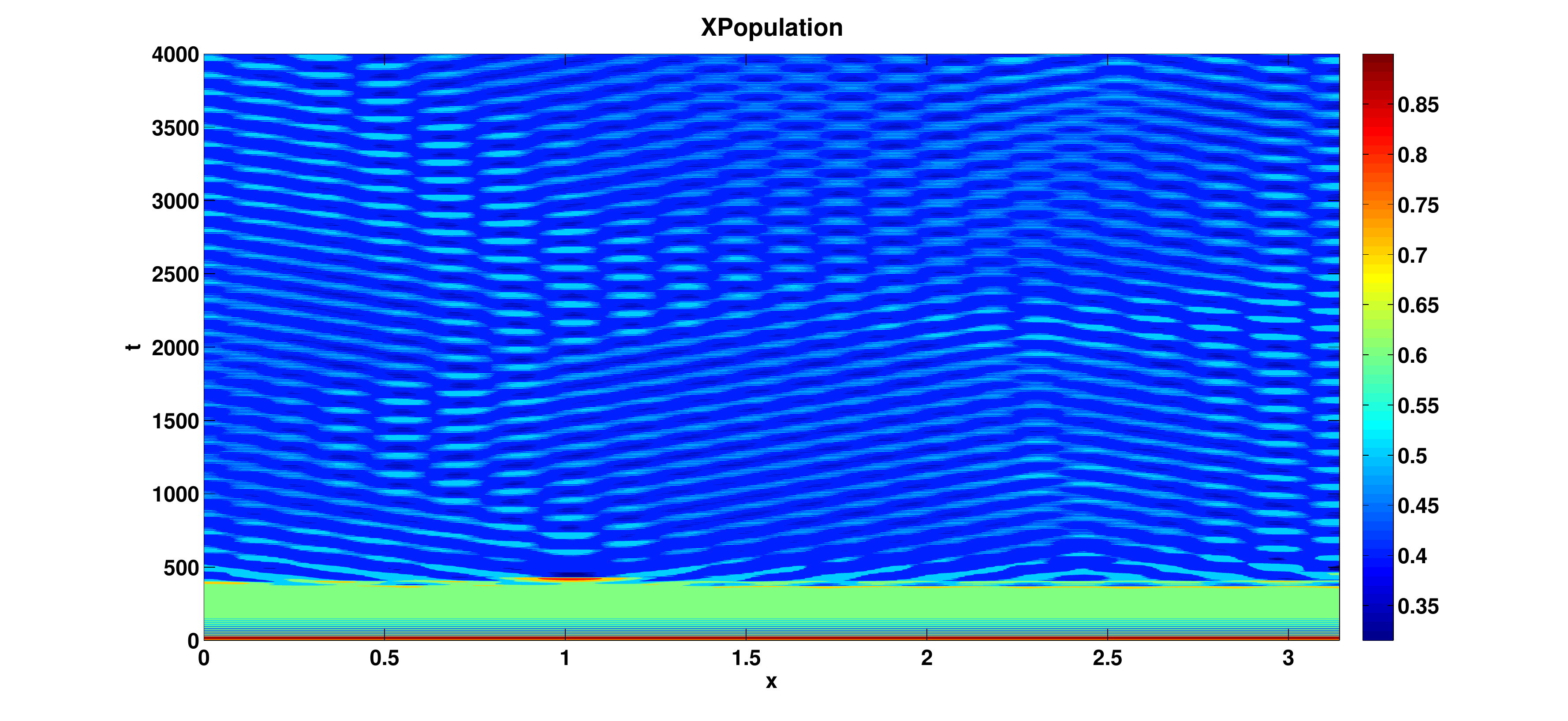}}
{
\includegraphics[width =3in]{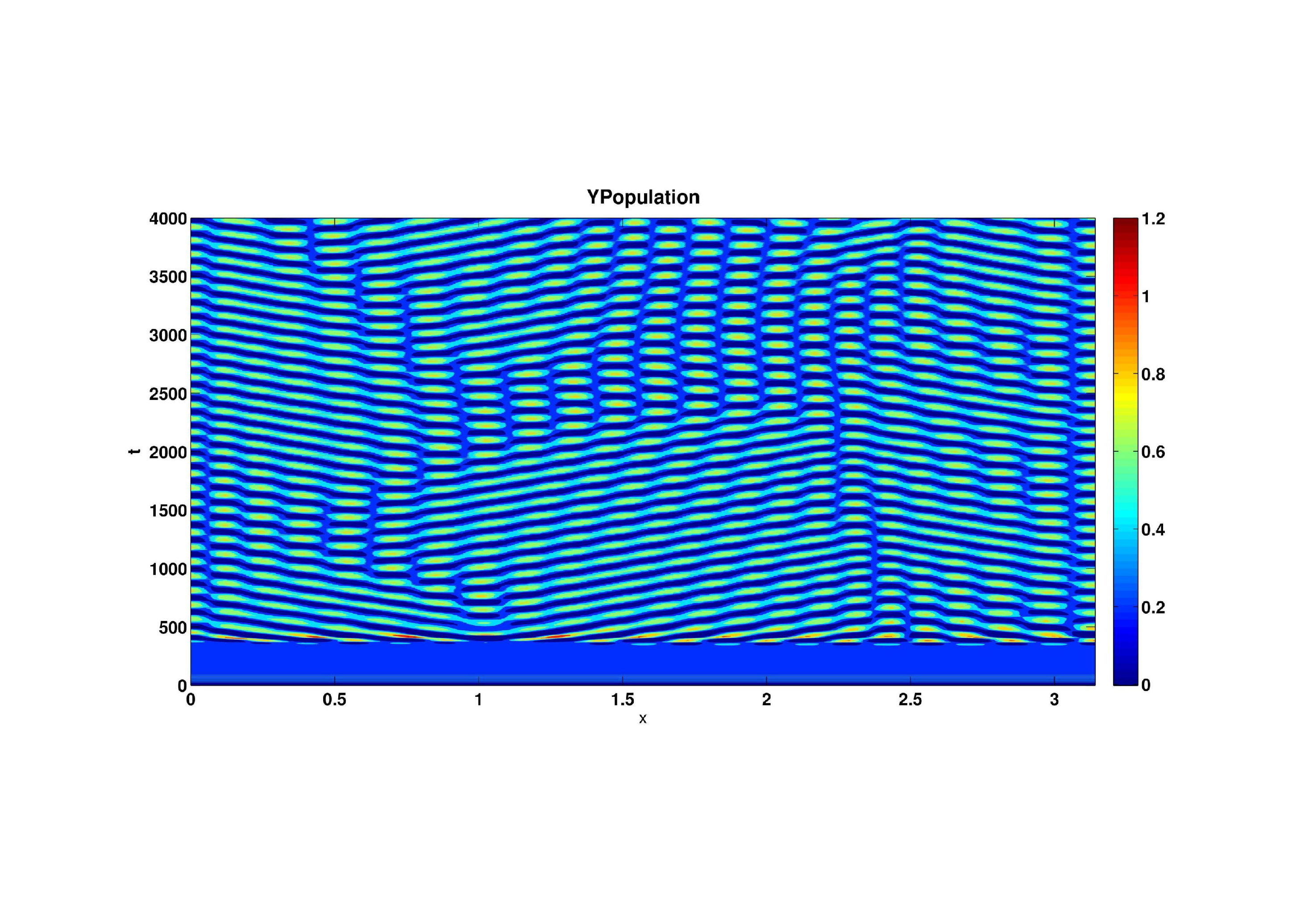}}
{
\includegraphics[width =3in]{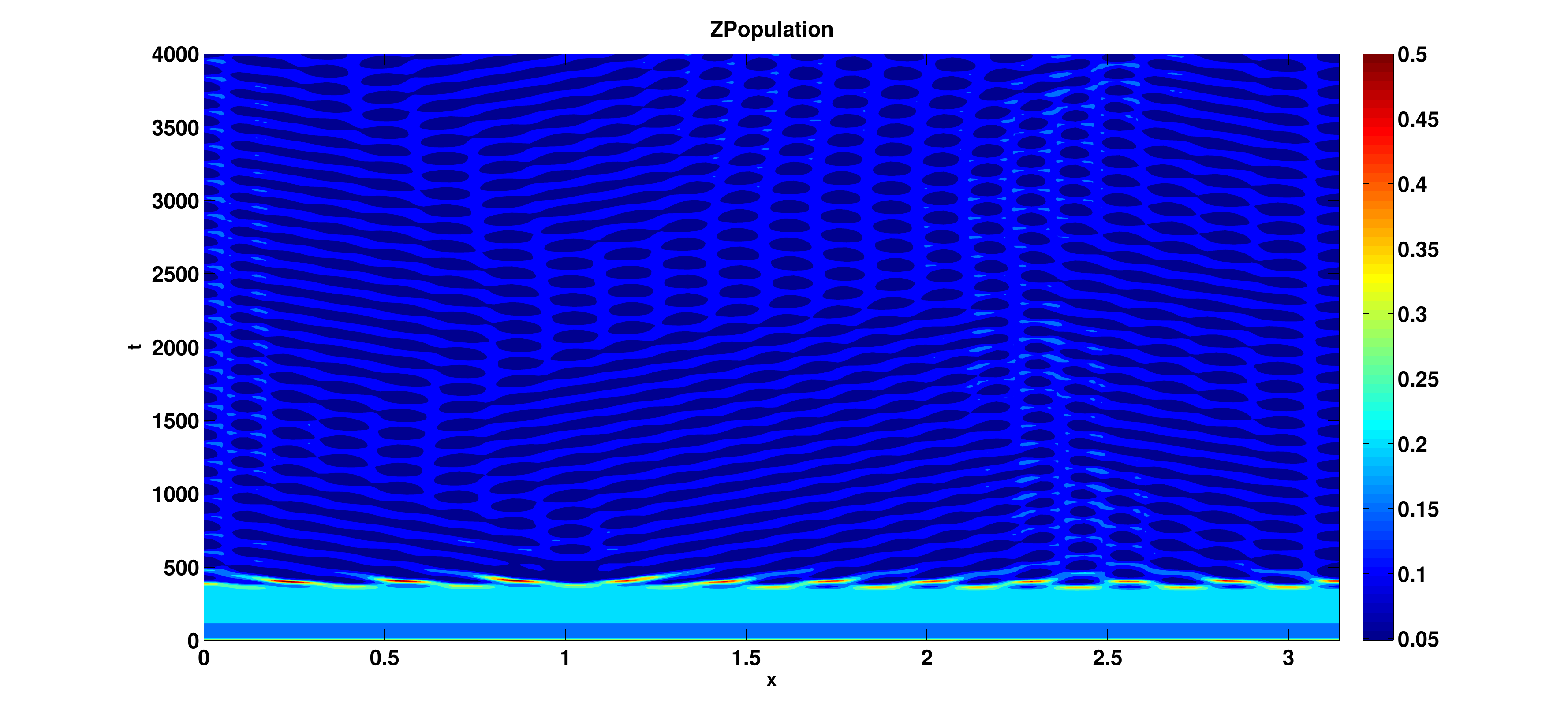}}
\caption{The densities of the three species are shown as contour plots in the x-t plane (1 dimensional in space). The long-time simulation yields Spatio-temporal patterns, that are spatio-temporal.  }
\label{1t}
\end{figure}

\begin{figure}[htb]
{
\includegraphics[width =3in]{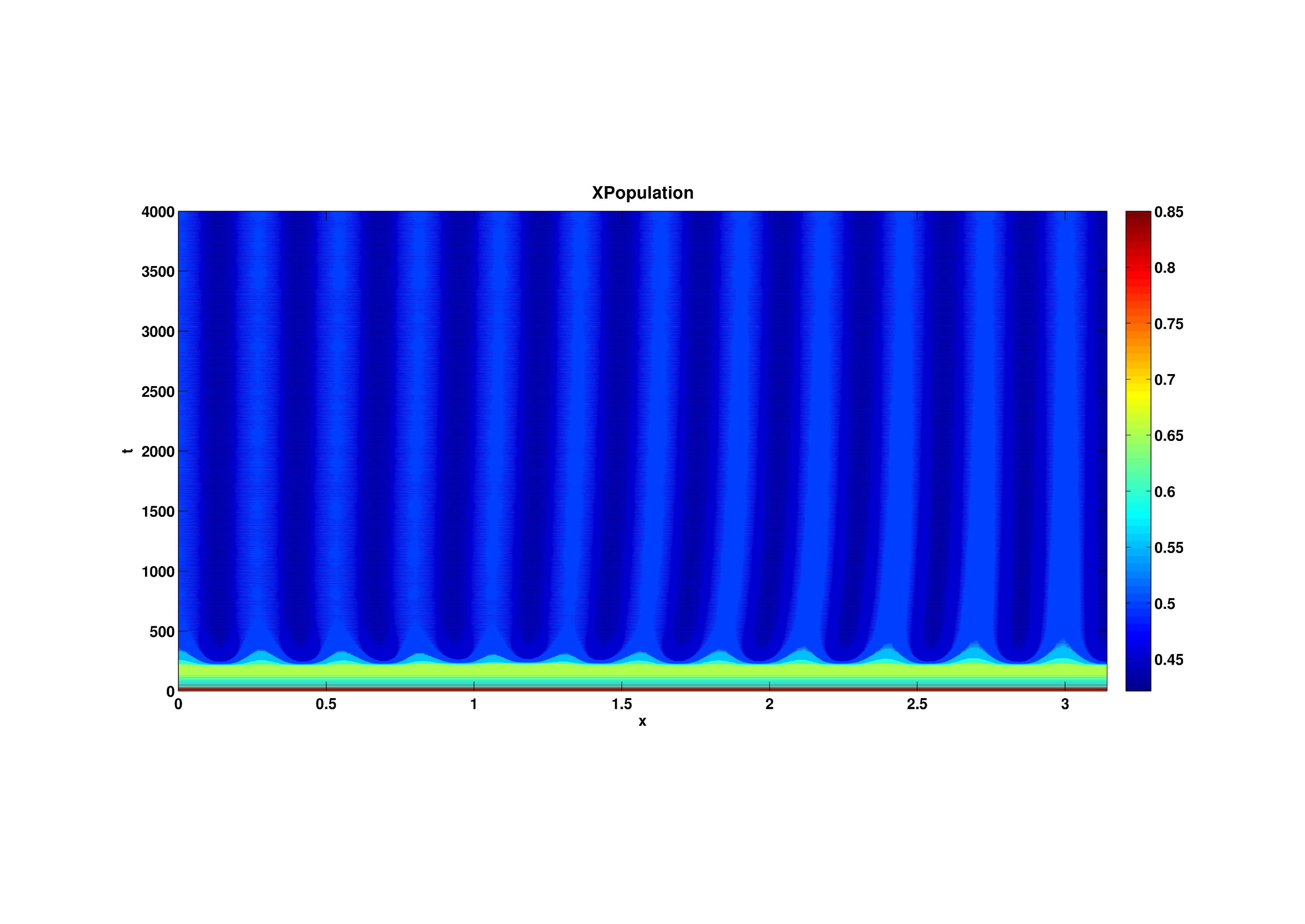}}
{
\includegraphics[width =3in]{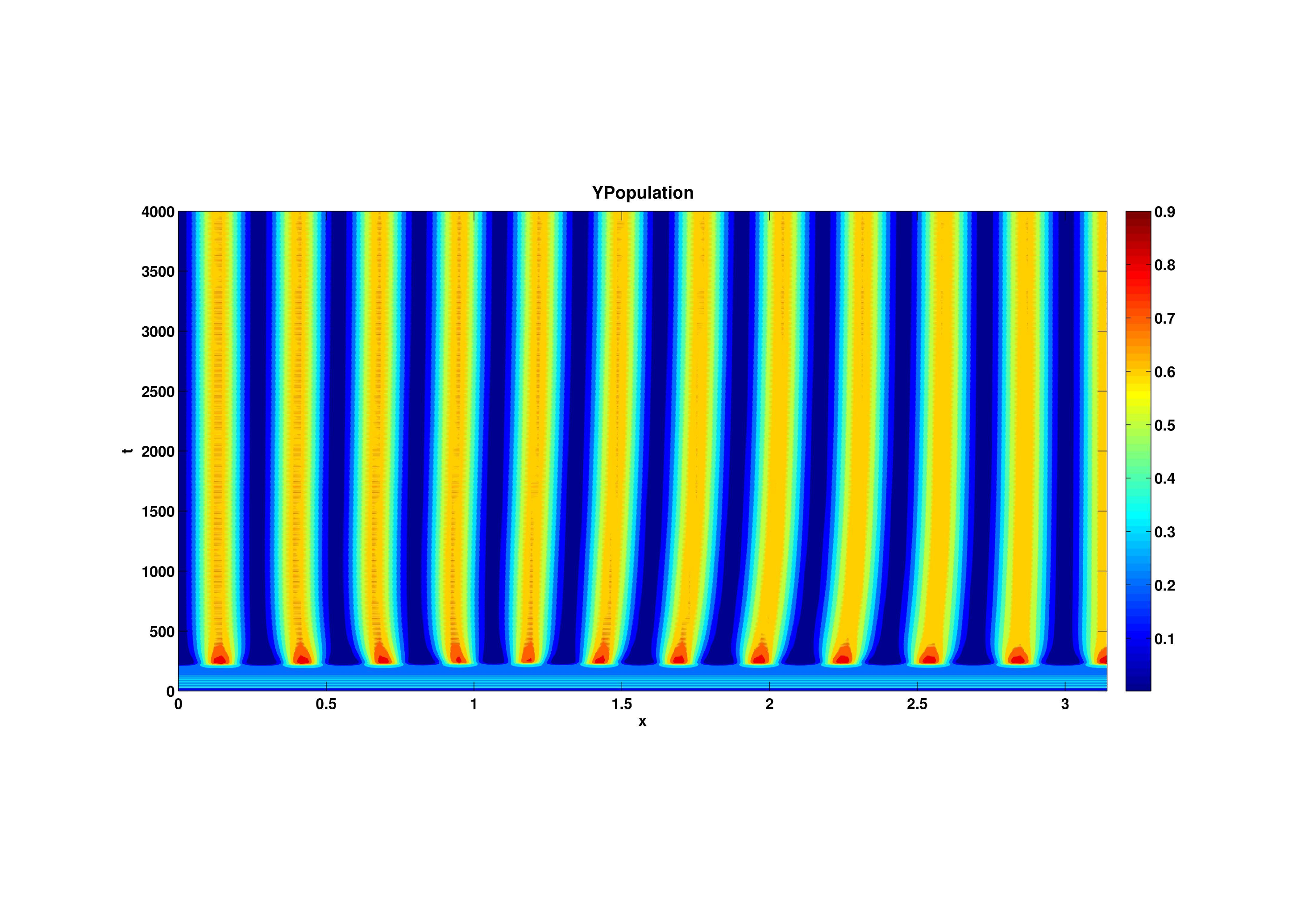}}
{
\includegraphics[width =3in]{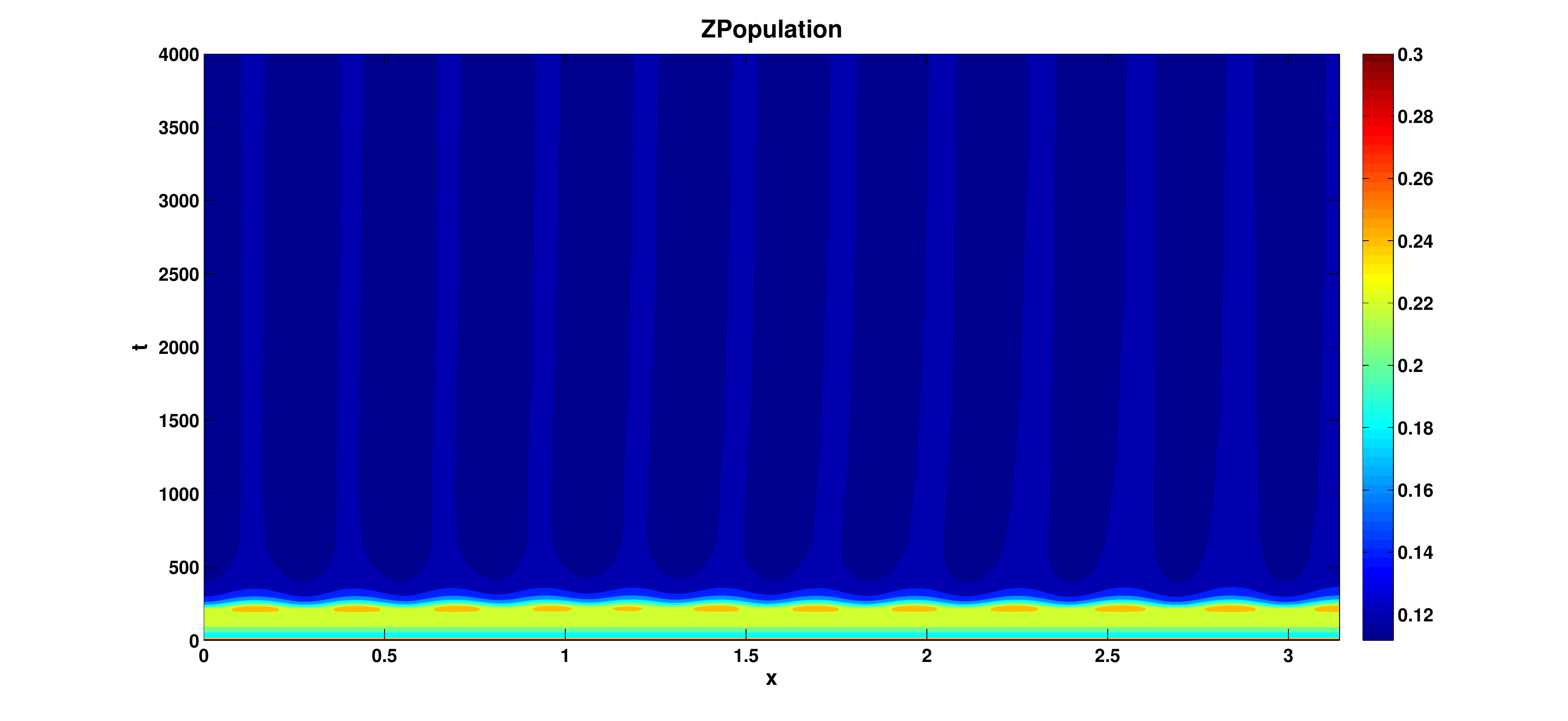}}
\caption{The densities of the three species are shown as contour plots in the x-t plane (1 dimensional in space). The long-time simulation yields stripe patterns that are purely spatial.}
\label{2t}
\end{figure}

\begin{figure}[htb]
{
\includegraphics[width =2in]{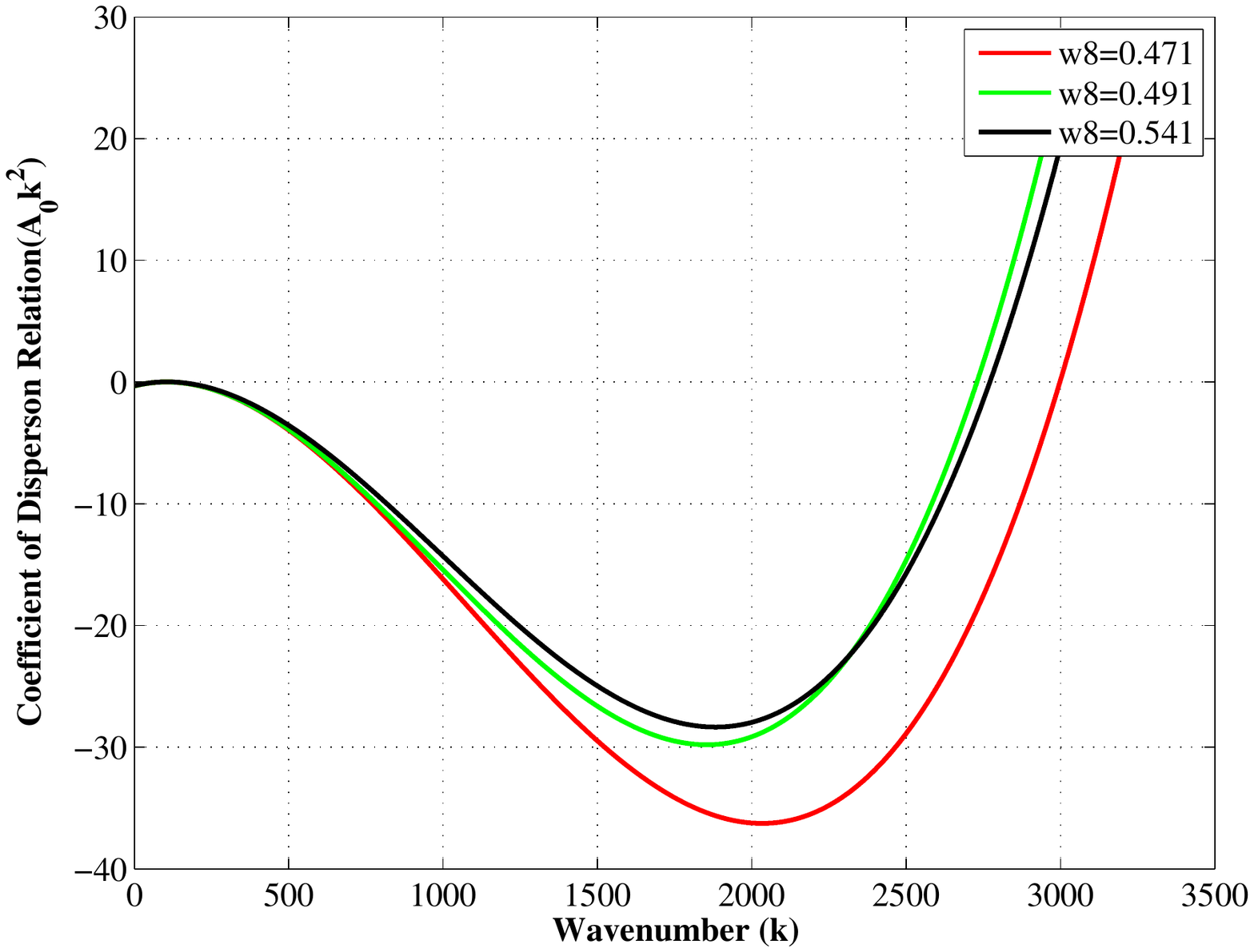}}
\caption{We observe the effect of \emph{predator interference} given by parameter $\omega_{8}$ on the Turing instability. This plot shows that increasing    $\omega_8$, makes fewer modes unstable and thus effects the Turing instability.}
\label{3t}
\end{figure}
\label{app1}

We choose parameters  $ \omega_4=0.210,  \omega_5=1.150,  \omega_6=2.930,  \omega_7=0.540, \omega_8=0.491, \omega_9=0.100, \omega_{10}=0.350, \omega_{11}=0.200, \omega_{12}=0.280, d_1 =0.01, d_2 =0.0001, d_3=0.000001, u^*=0.771530, v^*=0.29, r^*=0.313$, and simulate \eqref{eq:1dc} to obtain spatiotemporal patterns as seen in fig. \ref{1t}. We next choose parameter values  The  parameters are: $  \omega_4=0.210, \omega_5=1.160, \omega_6=2.930, \omega_7=0.540, \omega_8=0.541, \omega_9=0.100, \omega_{10}=0.350, \omega_{11}=0.200, \omega_{12}=0.280, d_1 =0.01, d_2 =0.0001, d_3 =0.001,  u^*=0.771530,  v^*=0.291,  r^*=0.317$ and simulate \eqref{eq:1dc} to obtain spatial patterns as seen in fig. \ref{2t}. In fig \ref{3t} we see that increasing interference causes fewer modes to become unstable and thus effects the Turing instability. 

%%%%%%%%%%%%%%%%%%%%%%%%%%%%%%%%%%%%%%%%%%%%%%%%%%%%%%%%%%%%%%%%%%%%%
\section{Spatio-Temporal Chaos}
The goal of this section is to investigate spatio-temporal chaos in the model \eqref{eq:1dc}. Spatio-temporal chaos is usually defined as deterministic dynamics in spatially extended systems that are characterized by an apparent randomness in space and time \cite{Cai01}. There is a large literature on spatio-temporal chaos in PDE, in particular there has been a recent interest on spatially extended systems in ecology exhibiting spatio-temporal chaos \cite{M02}. However, most of these works are on two species models, and there is not much literature in the three-species case. Note, that the appearance of a jagged structure in the species density, as seen in \cite{N13}, which seems to change in time in an irregular way, does not necessarily mean that the dynamics are chaotic. One rigorous definition of chaos means sensitivity to initial conditions. Thus two initial distributions, close together, should yield an exponentially growing difference in the species distribution at later time. In order to confirm this in \eqref{eq:1dc}, we perform a number of tests as in \cite{M02}. We run \eqref{eq:1dc} from a number of different initial conditions, that are the same modulo a small perturbation. We then look at the difference of the two densities, at each time step in both the $L^{\infty}$ and $L^{1}$ norms.\\
The simulations use two different (but close together in $L^{1}(\Omega), L^{2}(\Omega), L^{\infty}(\Omega)$ norms) initial conditions. The first simulation (which we call $r_{unpert}$) is a perturbation of $(u^{*},v^{*},r^{*})$ by $0.0055\cos^{2}(x)$.  The second simulation (which we call $r_{pert}$) is a perturbation of $(u^{*},v^{*},r^{*})$ by $0.0056\cos^{2}(x)$. The densities of the species are calculated up to the time $t=10000.$  At each time step in the simulation we compute
\[
d(t) = ||r_{unpert}(x,t) - r_{pert}(x,t)||_{X} ,
\]
\noindent where $X=L^{1}(\Omega),~L^{2}(\Omega) \ \mbox{ and } \ L^{\infty}(\Omega)$ are used.  Then, $d(t)$ is plotted on a log scale.  In doing so, we observe the exponential growth of the error.  This grows at an approximate rate of $0.0082>0$. Since this is positive then this is an indicator of spatio-temporal chaos. These numerical tests provide experimental evidence for the presence of spatio-temporal chaos in the classical model \eqref{eq:1dc}. Figure \ref{ContourChaos} shows the densities of the populations in the $xt$-plane while Figure \ref{ContourChaosError} gives the error and its logarithm till $t=1000$.
\begin{figure}[htb!]
     \includegraphics[scale=.3]{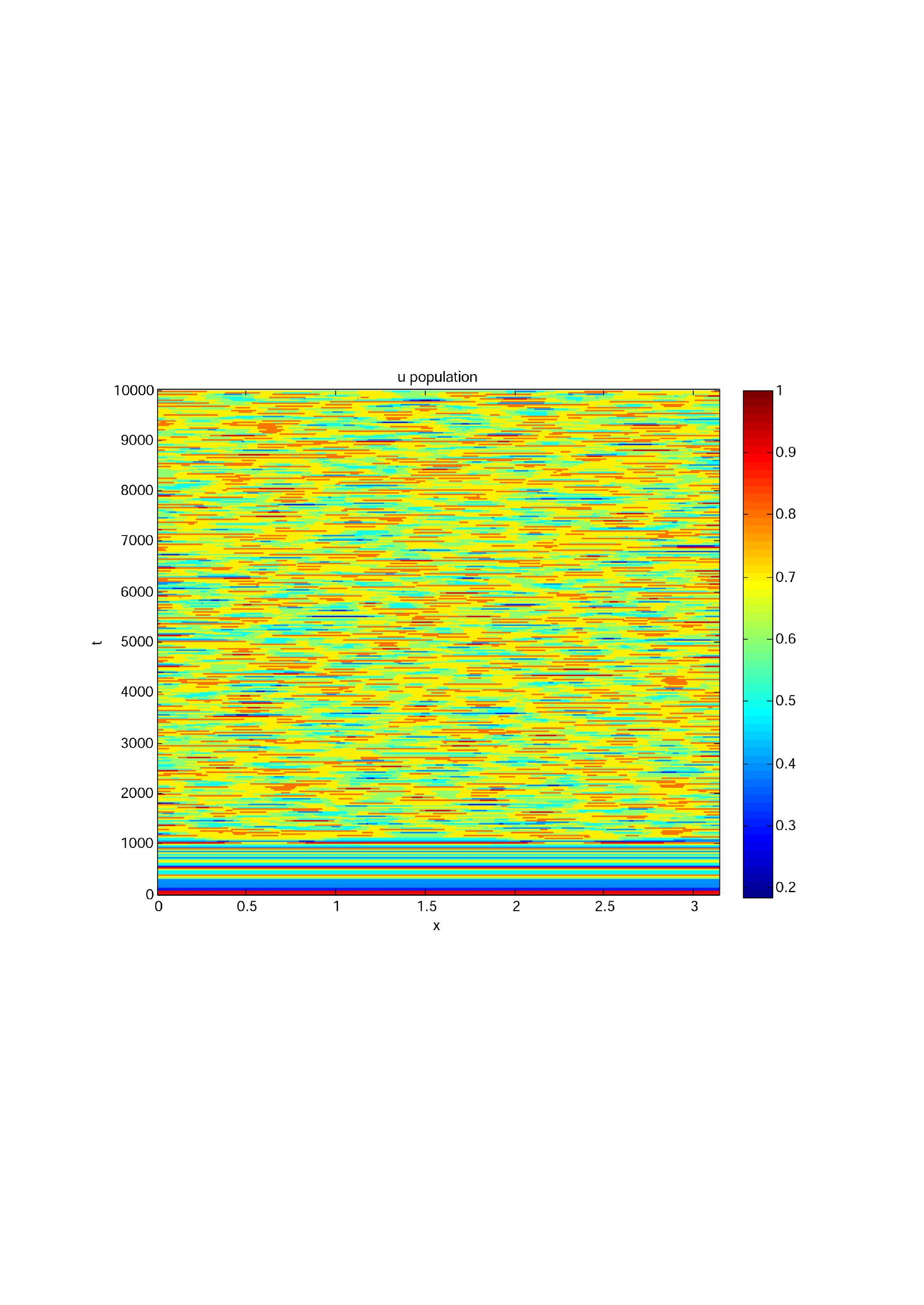}
	\caption{ The densities of the species $u$ is shown as contour plots in the $xt$-plane. The long-time simulation yields spatio-temporal chaotic patterns. $128$ grid points are used with a temporal step size of $.01$. The parameters used are $\omega_4=0.210, \omega_5=1.150, \omega_6=2.930, \omega_7=0.540, \omega_8=0.491, \omega_9=0.100, \omega_{10}=0.350, \omega_{11}=0.200, \omega_{12}=0.280,  d_1 =0.01, d_2 =0.0001, d_3 =0.000001, u^*=0.771530, v^*=0.291,  r^*=0.313$.  }
	\label{ContourChaos}
\end{figure}
\begin{figure}[htb!]
	\begin{center}
             \includegraphics[scale=0.3]{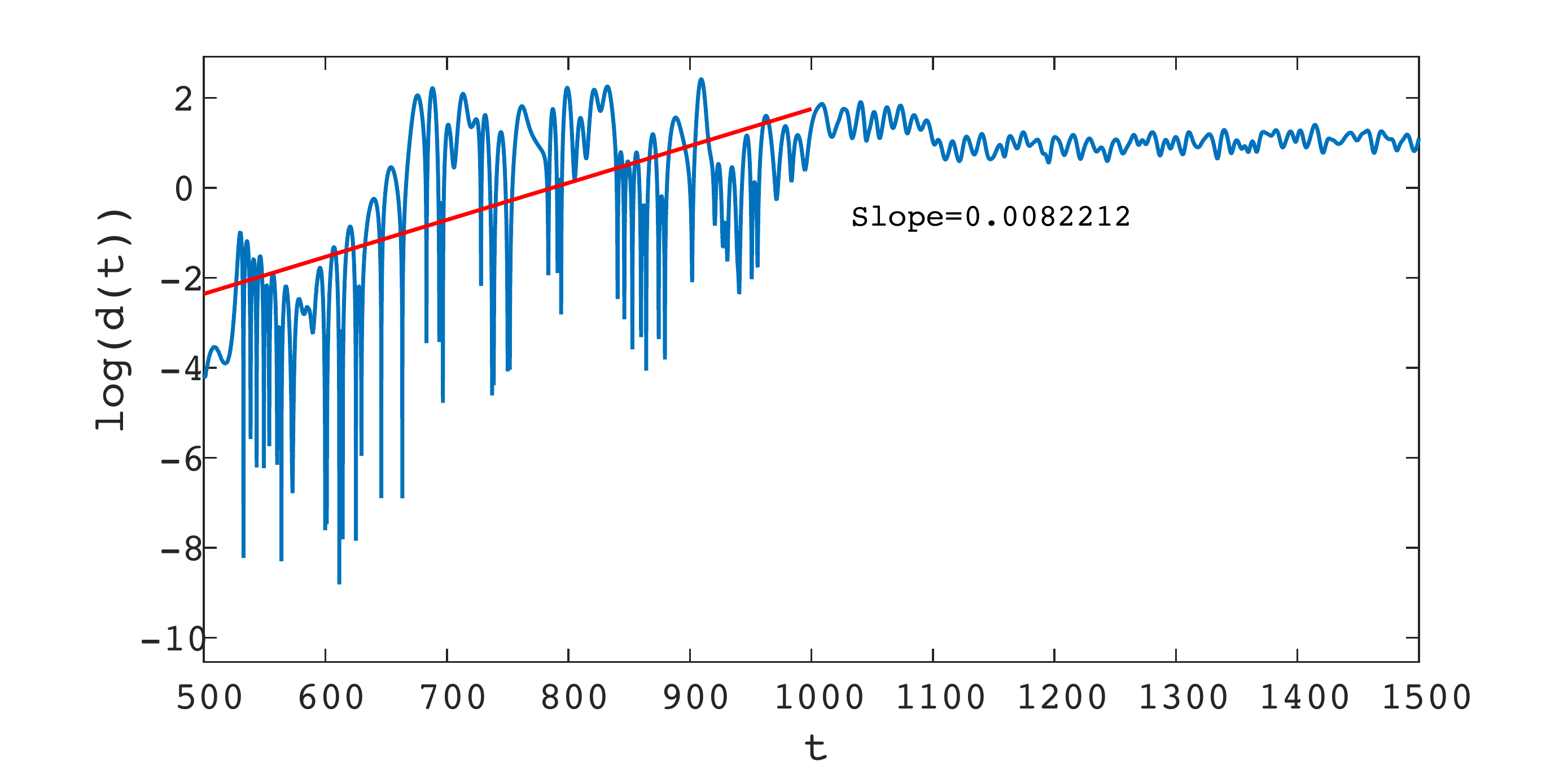}
		\includegraphics[scale=0.3]{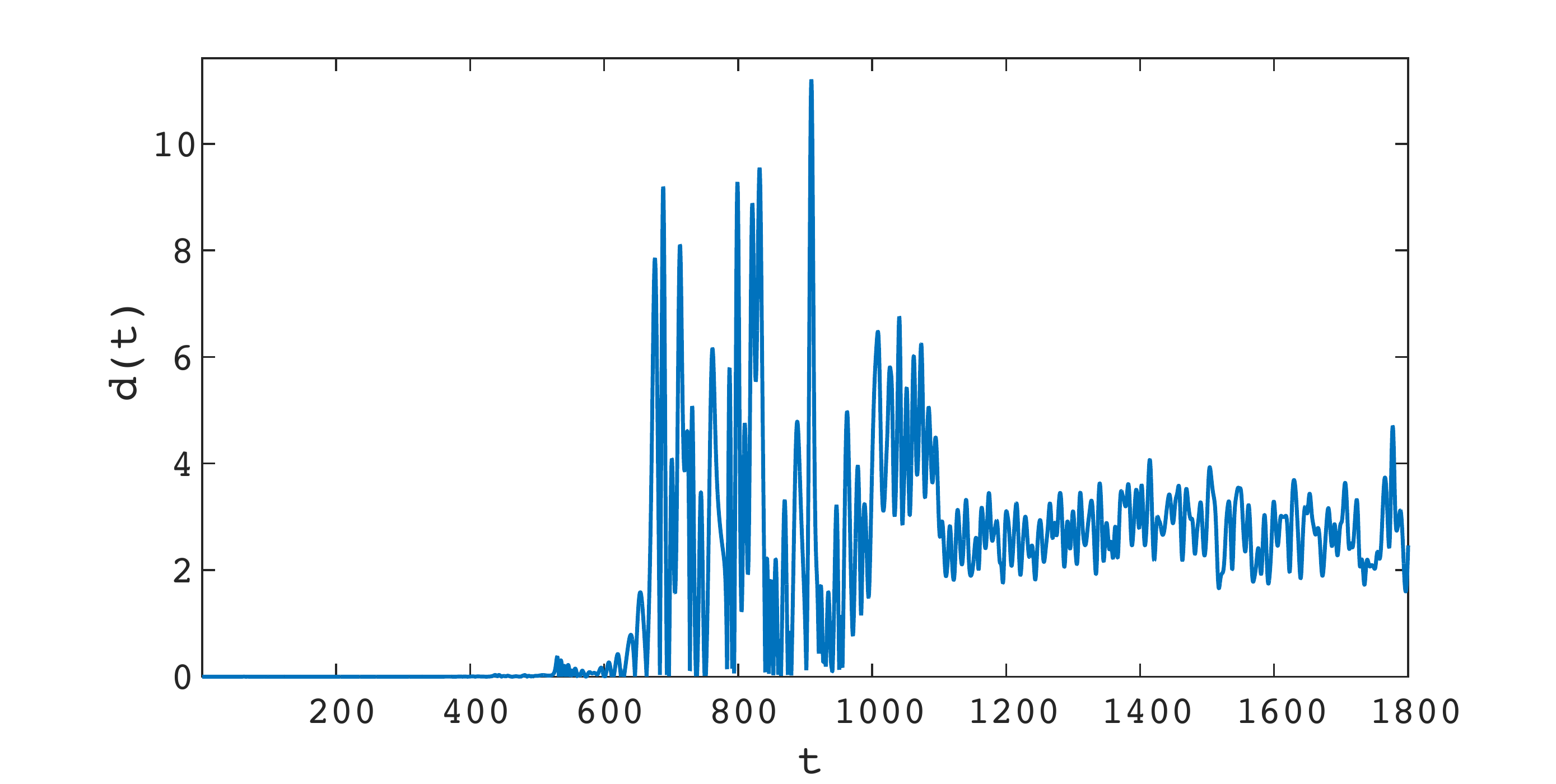}
		\end{center}
	\caption{Here we show the growth of the error $d(t) = ||r_{unpert}(x,t) - r_{pert}(x,t)||_{L^{2}}$, and its logarithm. The error grows at an approximate rate of $0.0082>0$, confirming spatio-temporal chaos.}
	\label{ContourChaosError}
\end{figure}

\section{Effect of Time Delay}
\label{3}
In this section we will try to unravel the effect of delay both on the finite time blow-up dynamics and on the chaotic dynamics. There is a large literature on the effect of delay on both two and three species predator-prey models. 
Upadhyay and Agrawal \cite{18} investigated the effect of mutual interference on the dynamics of delay induced predator prey system, and determined the conditions under which the model becomes globally asymptotically stable around the nonzero equilibria. Recently, Jana et al. \cite{19} have made an attempt to understand the role of top predator interference and gestation delay on the dynamics of a three species food chain model. Interaction between the prey and the middle predator follows Volterra scheme, while that between the top predator and its prey depends on the Beddington-DeAngelis type functional response. Upadhyay et al. \cite{20} studied the three species food chain model with generalist type top predator and obtained that increasing the top predator interference stabilizes the system, while increasing the normalization of the residual reduction in the top predator population destabilizes the system. In our current investigations we choose a constant time delay, in various forms. These are demonstrated next. We perform all our simulations using the standard MATLAB routine DDE23, for delay differential equations \cite{S021}.

\subsection{Effect of Delay on the Chaotic dynamics}
In this section we attempt to numerically investigate the effect of a constant time delay $\tau$ on the chaotic dynamics the system possesses \cite{20}. For this we choose to place the time delay in different parts of the functional response, of the top predator equation only. The first delayed model we consider is the following;
\begin{equation}
\label{eq:1deq}
\frac{dr}{dt}=\omega_{10}r^2-\frac{\omega_{11}r^2(t-\tau)}{v(t-\tau)+\omega_{12}}
\end{equation}
The equations for $u,v$ remain the same.
 For the investigations we have chosen $\tau=1.9$ in figure \ref{1de} and $\tau=.09$ in fig \ref{2de}.
%%--
\begin{figure}[htb!]
	\begin{center}
	\includegraphics[scale=0.27]{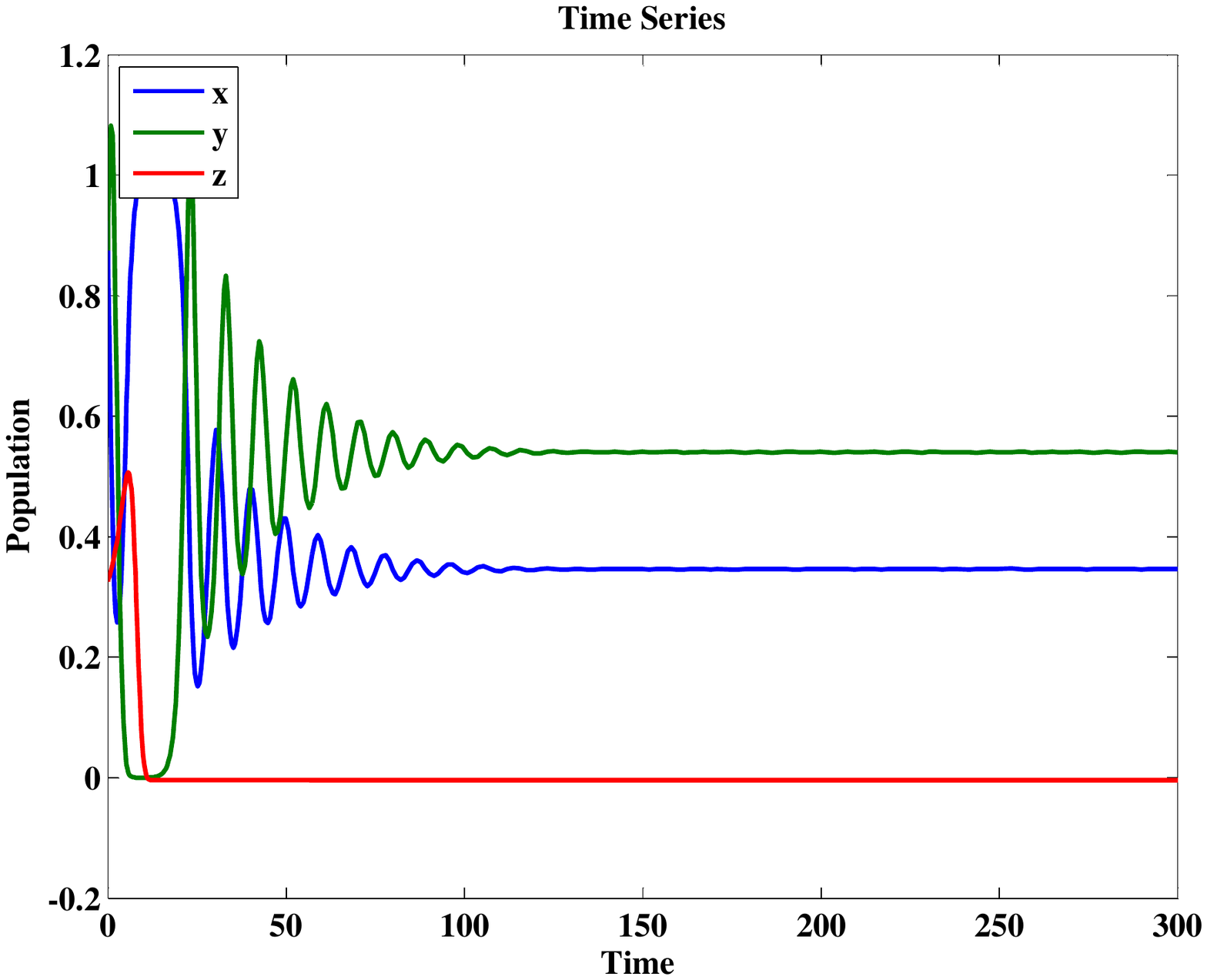}
    \includegraphics[scale=0.27]{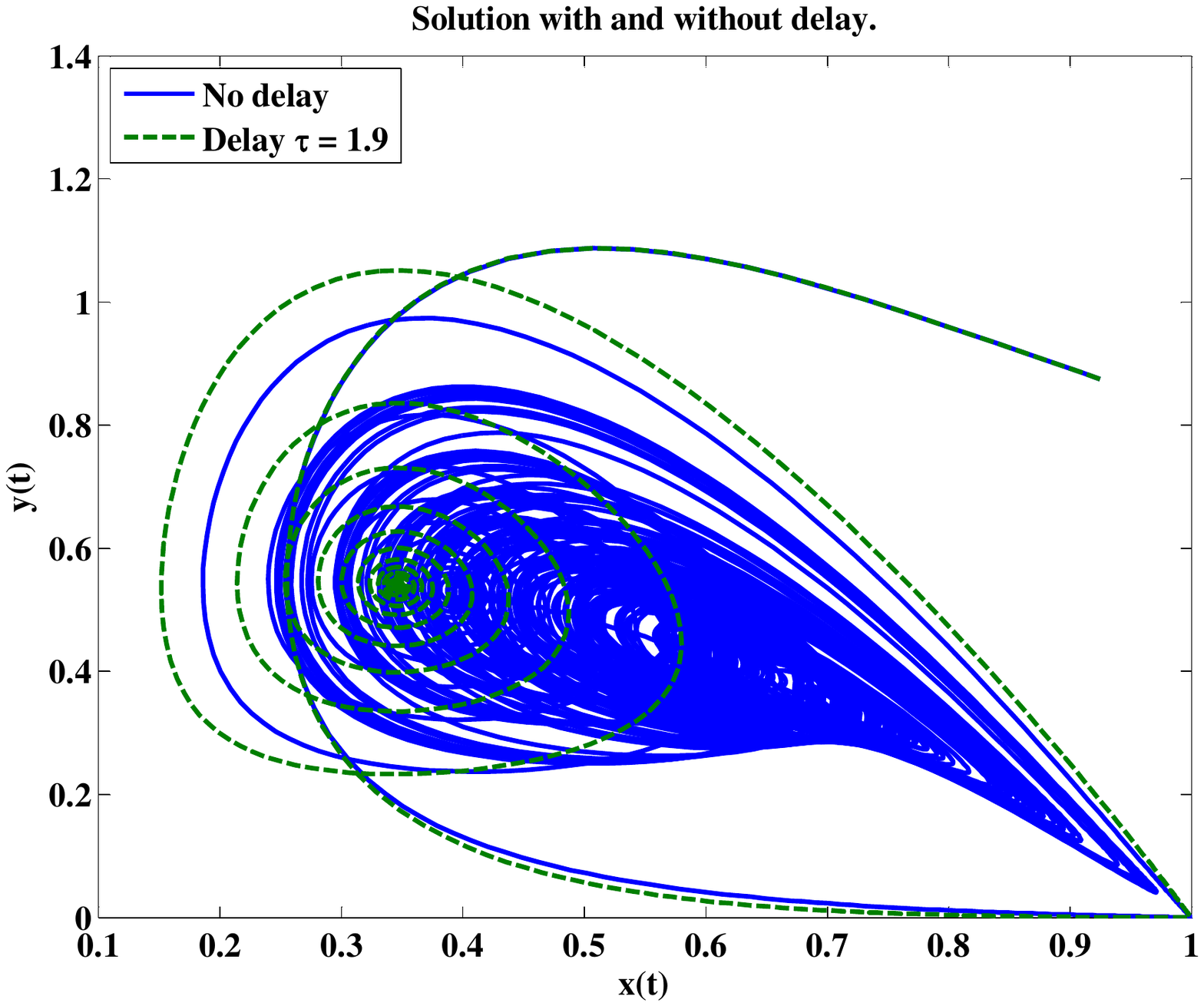}
	\end{center}
\caption{In this scenario we observe that if the time delay is modeled via \eqref{eq:1deq} then a delay of $\tau=1.9$ takes a chaotic state (with no delay) into a stable focus.}
\label{1de}
\end{figure}

\begin{figure}[htp!]
	\begin{center}
	\includegraphics[scale=0.27]{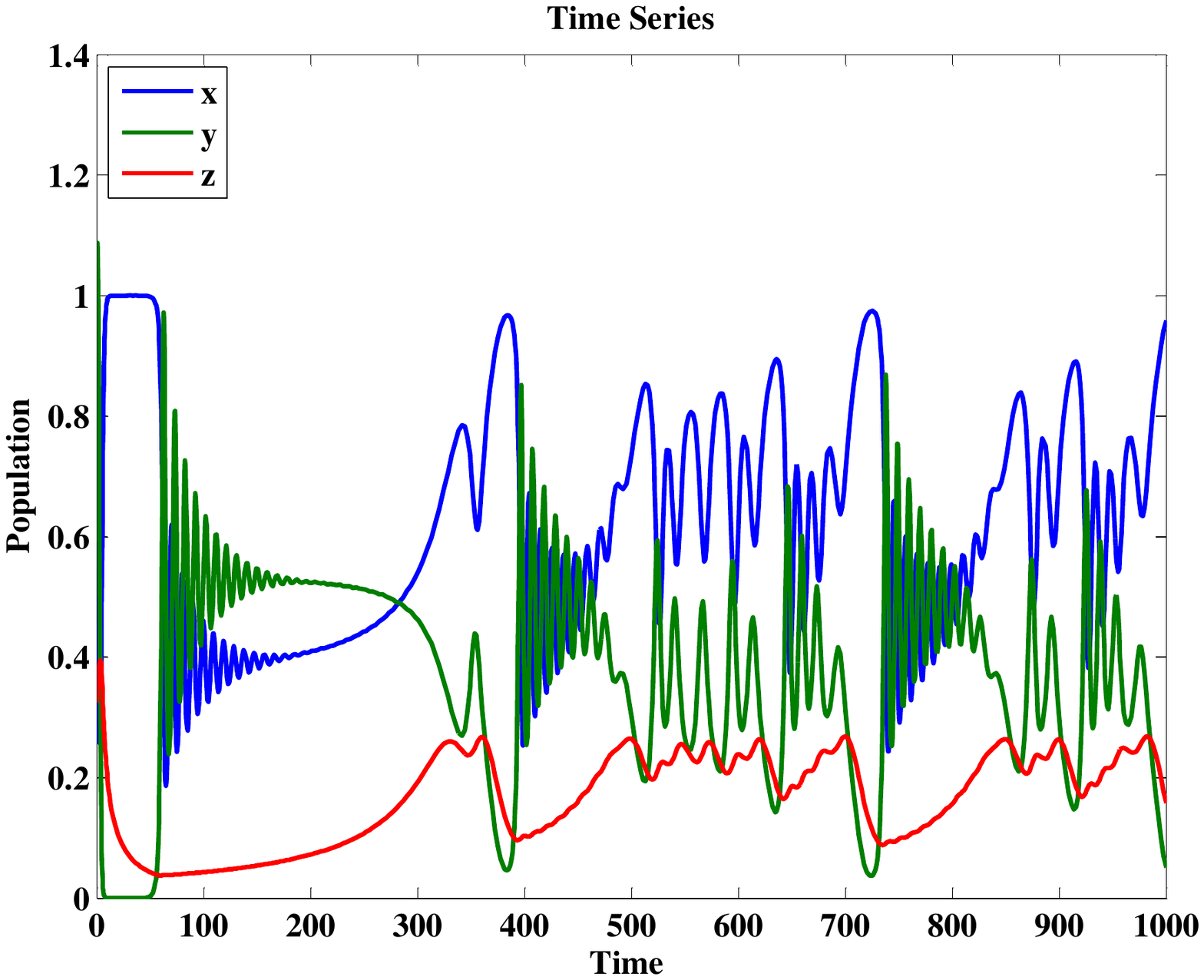}
    \includegraphics[scale=0.27]{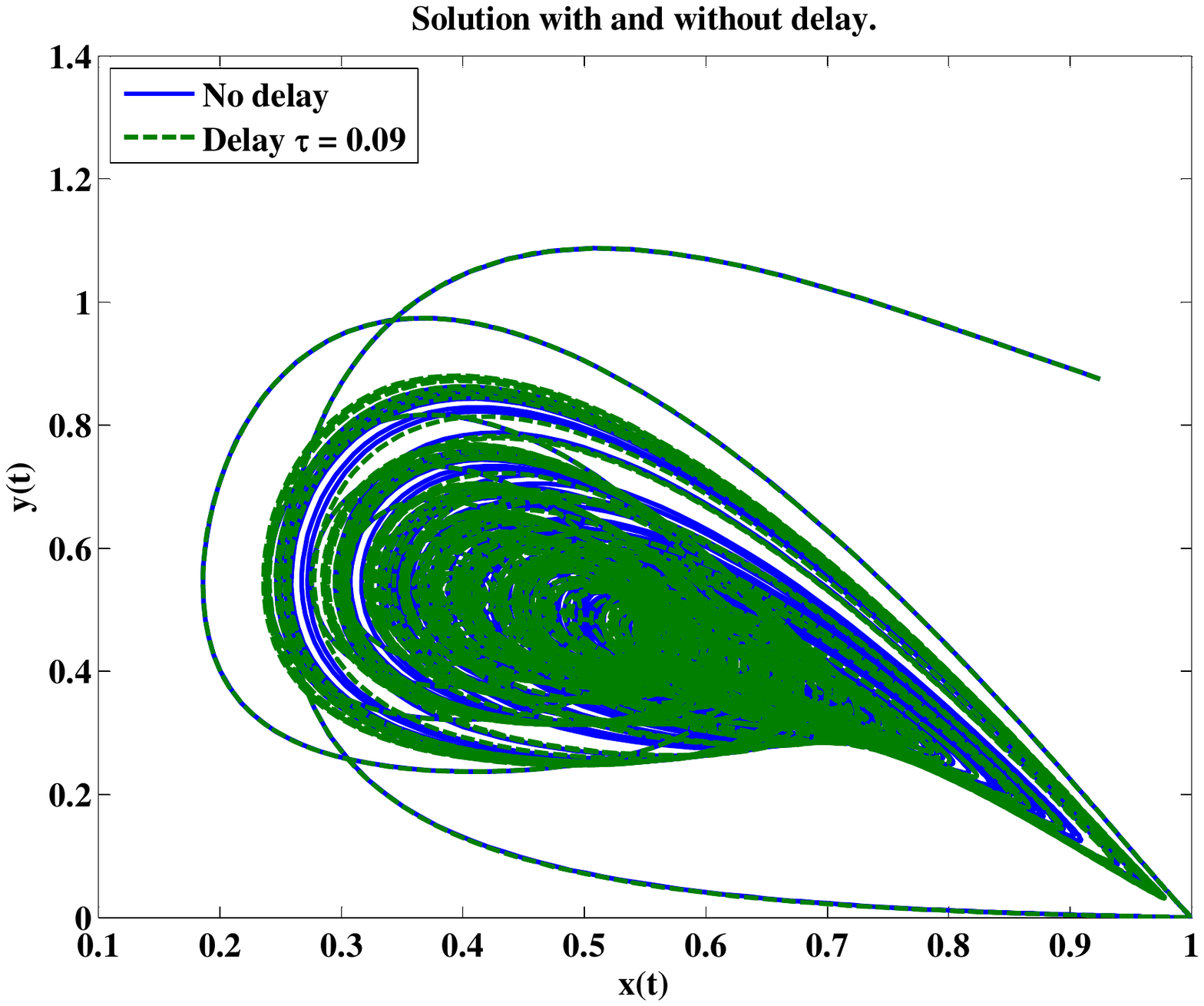}
	\end{center}
\caption{In this scenario we observe that if the time delay is modeled via \eqref{eq:1deq} then a small delay of $\tau=.09$ maintains a similar structure.}
\label{2de}
\end{figure}

Therefore via fig.\ref{1de}, fig.\ref{2de} we see that for a small time delay, the dynamics remain the same, but with an increase in the delay we can observe a radical change in the dynamics.

Next we aim to observe the effect of delay on the blow-up dynamics in the model. Here we incorporated delay in its growth term $\omega_{10}r^2$. This seems plausible due to gestation effect, as well as because this is the term that causes finite time blow-up. 
After the introduction of delay into the model we have;
\begin{equation}
\label{eq:1ade}
\frac{dr}{dt}=\omega_{10}r^2(t-\tau)-\frac{\omega_{11}r^2}{v+\omega_{12}}
\end{equation}

\begin{figure}[htb!]
\begin{center}
	\includegraphics[scale=0.27]{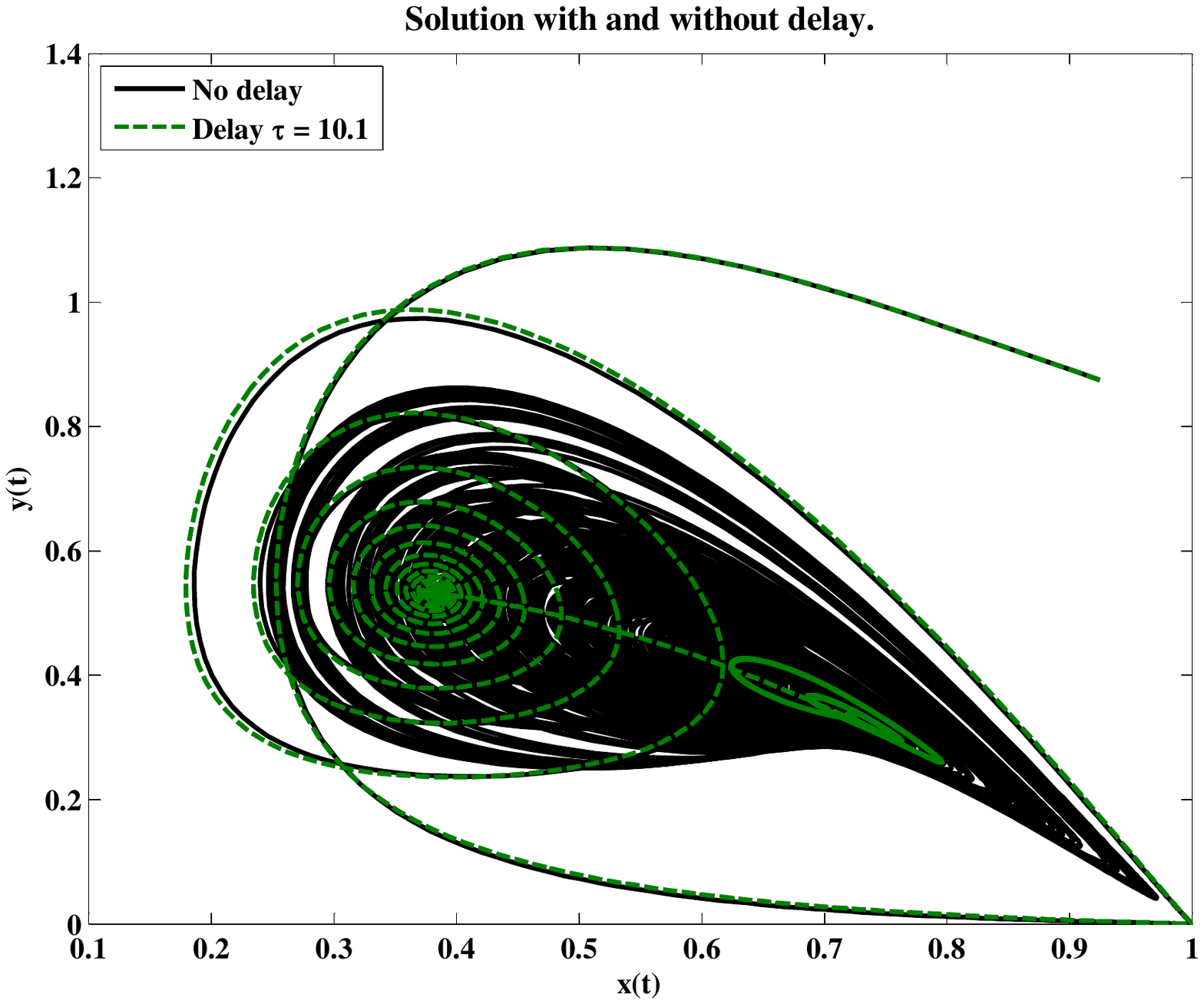}
	\includegraphics[scale=0.27]{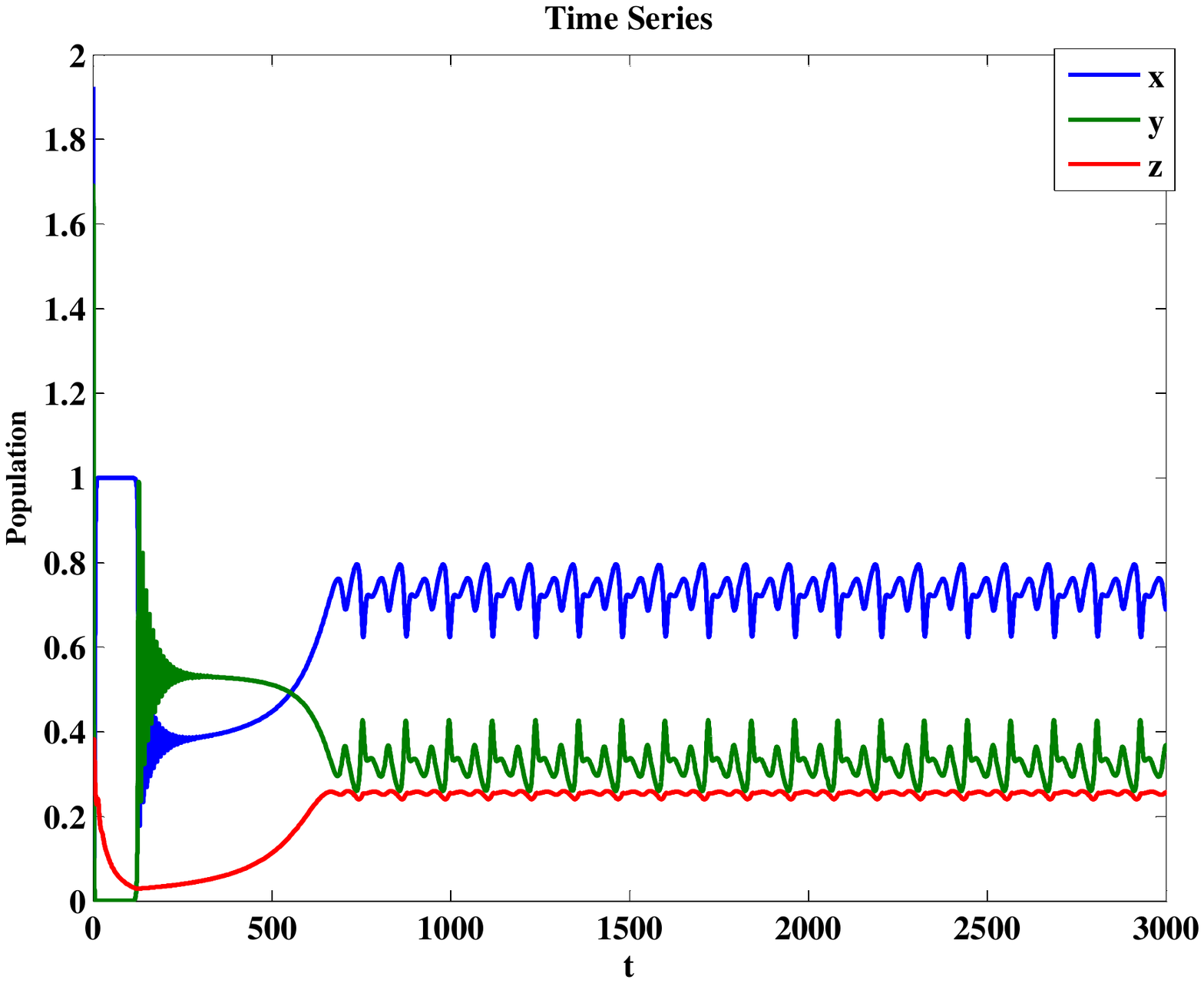}
		\end{center}
\caption{Here we see that a large delay of $\tau=10.1$ in \eqref{eq:1ade} takes a chaotic state (no delay) into a limit cycle.}
\label{fig:3de}
\end{figure}
We have chosen a parameter set for which we have the blow up in the ODE model without delay. Upon introducing the delay ($\tau=10.1$), in \eqref{eq:1ade} we found that the system does not exhibit blow-up.\\
The chosen parameter set for all of our simulations is
$\omega_4=0.210; \omega_5=1.290;\omega_6=2.930;\omega_7=0.540;\omega_8=0.981;\omega_9=0.100;\omega_{10}=0.445;\omega_{11}=0.200;\omega_{12}=0.280.$

%\begin{figure}[htb]
%\label{fig:4de}
%{
%\includegraphics[width =1.5in]{}}
%{
%\includegraphics[width =1.5in]{}}
%\caption{Effect of Delay on Blow-up dynamics}
%\end{figure}

%To further observe the effect of delay on the chaotic dynamics in the model, we incorporated delay via the following form, as in \cite{AA06}, via negative feedback in the predators density.
%
%\begin{equation}
%\label{eq:1ad}
%\frac{dr}{dt}=\omega_{10}r^2-\frac{\omega_{11}rr(t-\tau)}{v(t-\tau)+\omega_{12}}
%\end{equation}
%
%\begin{figure}[htb]
%{
%\includegraphics[width =1.5in]{}}
%{
%\includegraphics[width =1.5in]{}}
%\caption{Here we observe that a delay of $\tau=1.9$ in \eqref{eq:1ad} takes a chaotic state (without delay) into a stable focus.}
%\label{5de}
%\end{figure}

\section{Conclusion}

In the current manuscript we propose an alternate theory that provides a partial answer to the paradox of the generalist predator. Generalist predators are considered poor for biological control purposes, primarily due to mutual interference and their interference in the search of other specialist predators. How then might they be effective in controlling pest densities, as suggested by real field data? We suggest that the interference, might actually be a cause in their population explosion, enabling them in these excessive numbers to control the target pest. From a biological point of view, I  think the crucial point is that by interfering in the search of specialist predators, they \emph{indirectly keep the pest density high enough}, for themselves to excessively harvest the pest, resulting in a sharp growth of their own population, described mathematically by finite time blow-up. 
Thus there seems to be an underlying feedback mechanism, between this indirect interference and their own harvesting. This is a subtle point that warrants further investigation.

 Our work also opens an alternate approach to understanding the population explosion of species such as the cane toad, introduced originally for biological control. Maybe the toad due to its excessive interference with other predators, was able to keep the population of its food source high enough, so that it could feed enough and grow unchecked. This explosive growth was also helped by the fact that the toad being poisonous, was not easily predated upon. 

Finite time blow-up here should be viewed as a mathematical construct, that is a conduit to understanding population explosions. One should not consider it in the literal sense, as a population cannot become infinite in finite time.  However, an equation describing populations, with such emergent behavior, can be a tool to understand excessive increases in population, such as precisely the situation with the cane toad. All in all we hope that the alternate approach we provide, will help reconcile the conflict between theory and data, as concerns the effectiveness of the generalist predator as a biological control.

\end{document}